\DeclareMathOperator*{\argmax}{arg\,max}
\definecolor{mygreen}{RGB}{28,172,0} 
\definecolor{mylilas}{RGB}{170,55,241}
\def\BState{\State\hskip-\ALG@thistlm}
\newcommand{\supp}{\operatorname{supp}}
\newcommand{\E}{\mathbb{E}}
\newcommand{\R}{\mathbb{R}}
\newtheorem{theorem}{Theorem}
\newtheorem{corollary}{Corollary}
\newtheorem{proposition}{Proposition}
\newtheorem{assumption}{Assumption}
\newtheorem{remark}[theorem]{Remark}
\newcommand{\Rmnum}[1]{\expandafter\@slowromancap\romannumeral #1@}
\title{\LARGE \bf Designing Policies for Truth: Combating Misinformation with Transparency and Information Design   
}
\author{Ya-Ting Yang, Tao Li, and Quanyan Zhu
\thanks{The Authors are with the Department of Electrical and Computer Engineering, New York University, Brooklyn, NY, 11201, USA; E-mail: {\tt\small \{yy4348, tl2636, qz494\}@nyu.edu}. YY and TL have contributed equally. Correspondence should be addressed to TL.}%
}
\begin{document}

\maketitle
\thispagestyle{empty}
\pagestyle{empty}

\begin{abstract}
Misinformation has become a growing issue on online social platforms (OSPs), especially during elections or pandemics. To combat this, OSPs have implemented various policies, such as tagging, to notify users about potentially misleading information. However, these policies are often transparent and therefore susceptible to being exploited by content creators, who may not be willing to invest effort into producing authentic content, causing the viral spread of misinformation. Instead of mitigating the reach of existing misinformation, this work focuses on a solution of prevention, aiming to stop the spread of misinformation before it has a chance to gain momentum. We propose a Bayesian persuaded branching process ($\operatorname{BP}^2$) to model the strategic interactions among the OSP, the content creator, and the user. The misinformation spread on OSP is modeled by a multi-type branching process, where users' positive and negative comments influence the misinformation spreading. Using a Lagrangian induced by Bayesian plausibility, we characterize the OSP's optimal policy under the perfect Bayesian equilibrium. The convexity of the Lagrangian implies that the OSP's optimal policy is simply the fully informative tagging policy: revealing the content's accuracy to the user. Such a tagging policy solicits the best effort from the content creator in reducing misinformation, even though the OSP exerts no direct control over the content creator. We corroborate our findings using numerical simulations. 
\end{abstract}

\section{Introduction}
Misinformation has become a growing concern on online social platforms (OSP), as false information can spread rapidly and have significant consequences \cite{Zhao2020-zu}. For instance, false stories about candidates were shared widely through OSPs during the 2016 US presidential election; misinformation about the virus, mask-wearing policies, and vaccine concerns spread through social networks during the COVID-19 pandemic. To address this issue, OSPs have implemented policies such as labeling, tagging, or notifying to alert users to potentially false or misleading information \cite{twitter, fb_transparency}. Previous studies have shown that these policies effectively (to some extent) curb the spread of misinformation \cite{ Platform_intervention}.

However, as mandated by related regulations or ethical standards \cite{Crilley2019-xz, zhu23dce}, these policies are often transparent, meaning that they are publicly announced to the content creators (creators) and users. Aware of the OSP's policies, creators take advantage of this transparency by spending the least possible effort to make their posts as trustworthy as possible so as to pass the screening. On the other hand, OSPs are constantly upgrading their policies to combat such tactics. As the two parties fall into an endless arms race caused by the conflict of interests, it is natural to ask: \emph{ Does the two reach an equilibrium in the end? } Does the transparency requirement give content creators the upper hand? 

\begin{figure}
    \centering
    \includegraphics[width=3.3in]{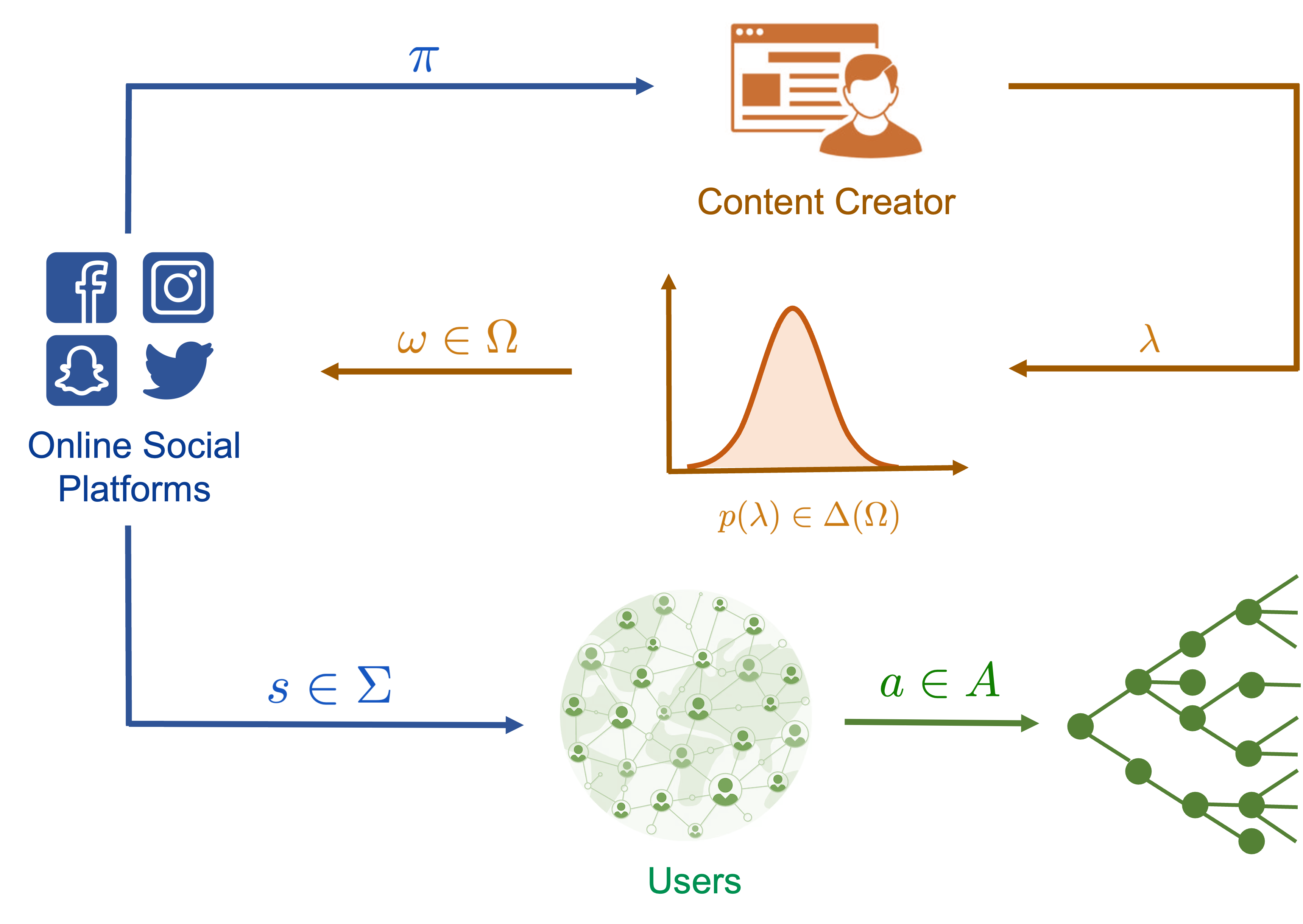}
    \caption{The Bayesian Persuaded Branching Process. The OSP first commits to an information structure $\pi$, followed by a private effort $\lambda$ exerted by the content creator, influencing the distribution of true/fake posts $\omega$. The users offer positive/negative comments to the post after observing the realized tag/label $s$, and then forward it to others. }
    \label{fig:proposed_framework}\vspace{-5mm}
\end{figure}

To address these questions, we propose a persuasion game that captures the interactions among the OSP, the content creator, and the user as illustrated in Fig. \ref{fig:proposed_framework}. The OSP designs a tagging policy whose realized tags indicate the content accuracy of an arbitrary post. Such a policy does not directly control any decisions or utilities but influences others' behaviors through information provision. Hence, this tagging policy is referred to as the  information structure \cite{tao_info}, and the OSP's problem is termed information design. Fully aware of this policy, the content creator exerts a private effort in creating the content, with the assumption that the more effort exerted, the more accurate the content is. Finally, the user observes the tagging policy and the realized tags, then decides on their views and comments. The OSP aims to persuade the user not to facilitate the misinformation circulation and incentivize the content creator to spend the highest possible effort (i.e., not to create misinformation).

The proposed model differs from the seminal Bayesian persuasion model \cite{kamenica11BP} in that the user cannot directly observe the prior distribution. As a result, the receiver must form a conjecture about the content creator's behavior to update their beliefs. This conjecture must be consistent with the agent's equilibrium behavior, which leads to the concept of perfect Bayesian equilibrium (PBE) as the natural solution concept for our game. In addition, the user's action (e.g., comment and share) might influence the trends on the social platform \cite{fighting_comment}. Hence, we use the branching process (BP) to capture misinformation spreading \cite{BP_cascade_twitter}, which affects the content creator's reputation and the OSP's payoff. 


This research demonstrates that a simple information structure can be a powerful tool in combatting misinformation spread. By adopting a fully informative policy, such as using tagging to indicate content accuracy, content creators are incentivized to produce trustworthy material. Although the OSP may not have direct control over content creators, it can nudge user perceptions through the information structure. The collective behaviors of users, under these perceptions, determine the content creators' reputations, effectively making users the OSP's proxy in terms of incentive provision. Our contributions can be summarized as follows.
\begin{itemize}
    \item We propose a three-player persuasion game to capture the interactions induced by the conflict of interests between the OSP, the content creator, and the user, where the multi-type branching process is utilized to model the spread of misinformation content;
    \item We develop a Lagrangian approach to identify players' strategies under perfect Bayesian equilibrium, which is known to be challenging to solve \cite{dughmi19hardness,tao22bp, tao23pot}. Through Bayesian plausibility, we transform the equilibrium problem into the posterior belief space and develop an equality-constrained nonlinear programming associated with the equilibria, facilitating the study of the optimal information structure. 
 
\end{itemize}

\noindent\textbf{Related works} Existing research on misinformation typically focuses on a finite set of players connected by a graph, with the reliability of articles, news, and content drawn from a known distribution \cite{model_online_mis, enga_mis}. For instance, \cite{model_online_mis} has proposed a model for online sharing behavior of fully Bayesian users under potential misinformation and studies the impact of network structure, demonstrating that social media platforms that aim to maximize engagement might help propagate misinformation. \cite{enga_mis} has investigated how the platform should design a signaling mechanism to influence users' engagement for maximizing engagement or minimizing misinformation purposes. 

In contrast, our approach considers the population-wide effects of misinformation circulation. Specifically, we analyze the proportion of individuals receiving negative comments among all receivers using branching processes, which is shown to match well to statistical characteristics of information cascades \cite{BP_cascade_twitter}, and has been utilized in studying the determinants behind misinformation spreading \cite{spreading_mis}. Rather than analyzing misinformation circulation\cite{BPBP_tagging} through branching processes, we aim to prevent it from being created in the first place. To study this preemptive solution, we introduce a third player, i.e., the content creator, into the Bayesian persuasion framework, where the OSP incentivizes the content creator to produce accurate content. The ultimate goal is to curb misinformation spread by promoting the creation of truthful and reliable content.

\section{Online Misinformation Circulation: Modeling and Information Design}
This section introduces a three-player persuasion game that models the interactions among the online social platform, the content creators, and the users. Naturally, misinformation circulation on OSP involves a population of content creators and users. To simplify the exposition, we consider a representative creator and a population of users with identical utilities. We pick a representative user, referred to as  ``the user,'' when discussing strategic reasoning in the persuasion game, as the population all share the same interest. In contrast, we consider ``users'' when treating population-level misinformation dissemination using branching processes.       

\subsection{The Bayesian Persuaded Branching Processes Model}
In the persuasion game, the OSP designs a tagging policy about a state variable that reflects the accuracy of the content of the post. Fully aware of this tagging policy, the content creator then exerts a private effort, which is unobservable to both the OSP and the user, to determine the distribution of the accuracy of the content. Less effort leads to more misinformation prevailing over social media. Finally, the user takes action by commenting on the post and sharing after knowing the tagging policy and observing the tag realization. Note that the state variable remains hidden from the user throughout the game, as individuals lack the necessary resources to verify the authenticity.   

The action taken by the user results in a \emph{trend} (negative or positive about the post) in social media. To understand this notion, we consider a multi-type branching process (introduced later in \Cref{sec:branching}). Denote by $X(t)$ the number of users who have just received the post with a negative comment at time $t$ ($x$-type user). Similarly, $Y(t)$ denotes the number of users who have received a positive comment ($y$-type user). After reading the received post, users forward it to some of their followers/friends with their own (either negative or positive) comments, producing ``offsprings'' (the new $x/y$-type users). The trend is measured through the proportion of negative comments over all the comments: $\eta(t)={X(t)}/(X(t)+Y(t))$.  

For the rest of the paper, we refer to the OSP, the content creator, and the user as the sender, the agent, and the receiver, respectively, following the custom in persuasion literature \cite{kamenica11BP}.  To summarize the discussion above, the persuasion game is given by the tuple $\left\langle \Omega, \Sigma, \Lambda, p, \mathcal{A}, \eta^*, u_S, u_A, u_R,  \right\rangle$, where 
\begin{enumerate}[i)]
    \item $\Omega$ is the state space endowed with Borel algebra, and $\omega\in \Omega$ reflects how accurate the content of the post is;
    \item $\Sigma$ is the signal space of the sender (Borel algebra), and $s\in \Sigma$ denotes the tag associated with the post;
    \item $\Lambda$ is the action set of the agent, and each $\lambda\in \Lambda$ represents how much effort the agent exerts in producing trustworthy content;
    \item $p: \Lambda\rightarrow \Delta(\Omega)$ is the control function of the agent, whose effort $\lambda$ is turned into the state distribution $p(\cdot|\lambda)$ over the accuracy of the content $\Omega$;
    \item $\mathcal{A}$ is the action set of the receiver, which is a continuum $[0, 1]$, and $a \in \mathcal{A}$ denotes the probability of offering a positive comment;
    \item $\eta^*$ is the proportion of negative comment $\eta(t)$ as $t \rightarrow \infty$ obtained from the stabilized branching processes, which is related to the reputation of the agent and the impact of misinformation spreading; 
    \item $u_S: \Omega\times \mathcal{A}\rightarrow \mathbb{R}$, $u_A: \mathcal{A}\times \Lambda\rightarrow \mathbb{R}$, $u_R:\Omega\times \mathcal{A}\rightarrow\mathbb{R}$ are utility functions of the sender, the agent, and the receiver, respectively. The definitions of these utilities are as follows.
\end{enumerate} 
\paragraph{The Receiver's Utility}
To minimize the mismatch between the comment and the truth, the receiver's utility is $ u_R(\omega, a)=-(a-\omega)^2$. Suppose that the receiver believes that the state variable is subject to $\mu\in \Delta(\Omega)$, its best response under this belief is 
\begin{equation}
    a^*(\mu)=\argmax_{a\in [0,1]}\E_{\omega\sim \mu}-(a-\omega)^2=\E_{\mu}[\omega].
\label{eq:a-star}
\end{equation}  

\paragraph{The Agent's Utility}
The agent is concerned with the effort and its reputation measured through $\eta^*$ (the proportion of negative comments on its post). Denote by $c(\lambda)$ the cost induced by the effort $\lambda$; and by $r_A(a)=1-\eta^*(a)$ the agent's reputation when the receiver responds with $a$. The agent's utility is given by   
\begin{equation}
    u_A(a,\lambda)=r_A(a)-c(\lambda),
\label{eq:u_a}
\end{equation}

\paragraph{The Sender's Utility}
The sender's goal is to mitigate the influence of misinformation: if the content is misleading, the sender prefers fewer positive comments. Define 
\begin{equation}
    u_S(\omega, a)=-(1-\omega)(1-\eta^*(a))+\omega(1-\eta^*(a)).
\end{equation} Given that $1-\eta^*(a)$ represents the proportion of positive comments and $1-\omega$ indicates the level of inaccuracy of the content, the former term suggests that the sender's utility decreases when inaccurate content receives positive comments. Similarly, the latter term implies that the sender' aims to  promote accurate content.

The game unfolds in three stages. 1) In the first stage, the sender designs and commits to a tagging policy (also termed signaling) $\pi: \Omega\rightarrow \Delta(\Sigma)$, specifying a conditional distribution $\pi(\cdot|\omega)$ over the possible tags when the authenticity of a post $\omega$ is revealed. 2) Second, observing the tagging policy $\pi$, the agent chooses an effort $\lambda$ that leads to a distribution over the accuracy of the post $p(\cdot|\lambda)\in \Delta(\Omega)$. 3) Finally, when encountering an arbitrary post $\omega$ from $p(\cdot|\lambda)$, the receiver receives a tag from the tagging policy and subsequently determines their view. Note that the tagging policy $\pi$ is transparent, whereas the agent's $\lambda$ is hidden from the user.

\subsection{Perfect Bayesian Equilibrium}\label{sec:PBE}
It is worth noting that what distinguishes the introduced model from the classical Bayesian persuasion \cite{kamenica11BP} is that the receiver now does not explicitly acquire the prior distribution $p(\lambda)$, as $\lambda$ is unobservable. Hence, when the receiver acts, they must resort to a conjecture on the agent's action to update the posterior beliefs $\mu$. This conjecture must be consistent with the agent's equilibrium choice, which naturally leads to the perfect Bayesian equilibrium (PBE). The formal definition and associated details are presented in the following.

A Perfect Bayesian Equilibrium (PBE) of the proposed persuasion game consists of a tagging policy $\pi$, the agent's effort $\lambda$, and a belief system $\{\mu_s, s\in \Sigma\}$\footnote{A belief system is a collection of posterior beliefs $\mu_s$, and $\mu_s$ denotes the belief when the receiver receives signal $s$. }, which satisfies the following properties:
\begin{enumerate}[i)]
    \item given a tagging policy $\pi$ (sender) and a belief system $\{\mu_s, s\in \Sigma\}$ (receiver), the agent's effort $\lambda$ maximizes the expected utility, i.e.,
    \begin{align}
        &\lambda =\argmax \sum_{\omega}p(w|\lambda)[\sum_{s}\pi(s|\omega)u_A(\mu_s, \lambda)]\label{eq:lambda-argmax}\\
        & u_A(\mu_s, \lambda):= r_A(a^*(\mu_s), \lambda)\nonumber
    \end{align}
    \item the receiver's belief is consistent with the agent's effort $\lambda$ and the tagging policy $\pi$, i.e.,
    \begin{align}
        &\mu_s=\frac{\pi(s|\cdot)\odot p(\cdot|\lambda)}{\langle \pi(s|\cdot) ,p(\cdot|\lambda) \rangle}, \label{eq:consistency}\\
        & \pi(s|\cdot)= [\pi(s|\omega_1),\ldots, \pi(s|\omega_N)]\in \mathbb{R}^{|\Omega|}\\
        & p(\cdot|\lambda)= [p(\omega_1|\lambda), \ldots, p(\omega_N|\lambda)]\in \mathbb{R}^{|\Omega|}
    \end{align} where $\odot$ denotes the point-wise product.
    \item the tagging policy maximizes the sender's expected utility, i.e.,
    \begin{align}
        \pi\in\argmax \sum_{\omega}p(\omega|\lambda)\sum_{s}\pi(s|\omega)u_S(a^*(\mu_s), \omega ).
    \end{align}
\end{enumerate}

\subsection{Binary-State State Model}
For simplicity, we focus on a binary case study. Under such a circumstance, the state space only consists of two elements $\Omega=\{0,1\}$ with $0$ indicating the content contains misinformation while $1$ represents the content is accurate. Hence, the signal space is also assumed to be binary: $\Sigma=\{0,1\}$, where $0$ and $1$ denote the ``fake'' and ``real'' tags, respectively\footnote{In general, a sufficient signal space needs to be of $(|\Omega|+1)$-cardinality \cite{dughmi19hardness}. Yet, as we later show in \Cref{prop:optimal-is}, the binary signal space suffices.}. Since the state space is binary, the corresponding prior distribution of the accuracy of the content lives in the simplex spanned by $p_0=[1,0]$ and $p_1=[0,1]$. Hence, we assume that the effort spent by the agent $\lambda$ is a scalar from $[0,1]$, and the resulting prior distribution is the convex combination of $p_0$ and $p_1$: $p(\lambda)=(1-\lambda)p_0+\lambda p_1$. 

As the state space is finite, the players' strategies are finite-dimensional vectors, and hence, we can ``vectorize'' our analysis so that convex analysis tools can be utilized. We impose the following customary assumption \cite{kamenica11BP, boleslavsky18moral-hazard} to ensure that the agent's equilibrium problem is well-behaved. 
\begin{assumption}
    For the agent's utility given by (\ref{eq:u_a}), we assume that $r_A(\cdot)$ is non-negative and bounded, and $c(\cdot)\in C^2$ is strictly increasing and convex. In addition, $c(0)=\nabla c(0)=0$, and $\nabla c(1)>1$.
\label{assump:cost}
\end{assumption}

Let $v_A(\mu)=r_A(a^*(\mu))$ denote the agent's payoff under the receiver's belief $\mu$. Moreover, let $\Bar{v}_A(\omega|\pi):=\sum_{s}\pi(s|\omega)v_A(\mu_s)$ denote the agent's expected payoff conditional on the generated state $\omega$ under the tagging policy $\pi$, and let $\Vec{v}_A(\pi)$ be the corresponding vector: $\Vec{v}_A(\pi)=[\Bar{v}_A(0|\pi),\Bar{v}_A(1|\pi)]$. Similarly, we have the following notations for the sender. Given the receiver's belief $\mu$, the sender's expected payoff is denoted by $v_S(\mu):=\E_{\omega\sim \mu}[u_S(a^*(\mu),\omega)]$. Then, let $\Bar{v}_S(\omega|\pi)=\sum_{s}\pi(s|\omega)v_S(\mu_s)$, and therefore $\Vec{v}_S(\pi):=[\Bar{v}_S(0|\pi), \Bar{v}_S(1|\pi)]$.  

To characterize the PBE in the proposed model, we use backward induction, i.e., first analyzing the optimal actions of the receiver, then the agent, and finally the sender. 
To begin with, the receiver's best response (comment) under the belief $\mu_s$ is given by \eqref{eq:a-star}. The best-response $a^*(\mu_s)$ then affects the spread of misinformation in social media through branching processes presented in  \Cref{sec:branching}.  

\section{Content Spreading Through Branching Process} 
This section treats the spread of misinformation through branching processes. Specifically, we focus on the evolution of the trend $\eta(t)$, the proportion of negative comments, as the receiver forwards the post to others. One key finding is that the evolutionary dynamics of $\eta(t)$ under the branching process stabilizes in the limit, and the receiver's belief completely determines the stationary point $\eta^*$. 

\subsection{Multi-type Branching Processes}\label{sec:branching}
Suppose that the number of the receiver's friend $N$ is independent and identically distributed with expectation $\mathbb{E}[N]=m_{N}$ and is finite. The receiver shares the post with $Bin(N, q)$ friends, where $q \in [0, 1]$ represents the impact or attractiveness of the post (assumed to be constant). Hence, the number of ``offspring'' (friends receiving the sharing) of the receiver, denoted by $\xi$, is subject to a binomial distribution: $\xi \sim Bin(N, q)$ with $\mathbb{E}[\xi]=m_{N}\cdot q:= m$.

Denote by $\tau_n$ the time when $n^{th}$ individual ``wakes up'', meaning that such individual becomes active on an OSP and is ready to share the post. Let $X_n = X(\tau_n^{+})$, $Y_n = Y(\tau_n^{+})$, $Z_n = X_n + Y_n$, and $\xi_n \overset{i.i.d.}{\sim}Bin(N, q)$. If the $x$-type receiver (who receives negative comments) wakes up, then 
\begin{equation}
  \begin{aligned}
    X_{n+1} &= X_{n}-1 + \textbf{1}_{x} \xi_n,\\
    Y_{n+1} & = Y_{n} + \textbf{1}_{y} \xi_n,
\end{aligned}  \label{eq:x-wake}
\end{equation}
and if the $y$-type receiver wakes up, 
\begin{equation}
    \begin{aligned}
    X_{n+1} &= X_{n} + \textbf{1}_{x} \xi_n,\\
    Y_{n+1} &= Y_{n} -1 + \textbf{1}_{y} \xi_n.
\end{aligned}\label{eq:y-wake}
\end{equation}
 where the indicator function $\textbf{1}_{x}$ means that the receiver makes a negative comment while $\textbf{1}_{y}$ indicates the opposite (the positive comment). The total population is updated by $Z_{n+1} = Z_{n} -1 + \xi_n$.

The probability of a receiver who receives the post with a negative comment also commenting negatively can be characterized by a negative-to-negative factor $\alpha_{xx}(s)$ depending on the tag $s$. Likewise, we denote by a positive-to-negative factor $\alpha_{yx}(s)$ the probability of a receiver who receives a positive comment commenting negatively. As the receiver's comment only depends on the belief $\mu_s$ [see the best response in \eqref{eq:a-star}], $\alpha_{xx}(s)=\alpha_{yx}(s)=1-a^*(\mu_s)=1-\E_{\mu_s}[\omega]$. Simply put, the higher the $E_{\mu_s}[w]$, the more confident the receiver is about the content accuracy, and hence, the less likely the receiver is to give a negative comment.

\subsection{Stochastic Approximation Analysis}

To analyze the limit behavior of the process, we apply stochastic approximation \cite{SA_for_BP} and consider the continuous-time dynamics of the multiple-type branching. Since there are only two types in the branching process, it suffices to consider the dynamics of the total population and that of the $x$-type. Toward this end, let $\Bar{Z}_n = \frac{Z_n}{n}$, $\Bar{X}_n = \frac{X_n}{n}$, and $\gamma_n = \frac{1}{n+1}$, and then we aggregate the branching equations in \eqref{eq:x-wake} and \eqref{eq:y-wake}, leading to the following:  
\begin{equation}
    \begin{aligned}
    \Bar{Z}_{n+1} = \Bar{Z}_n &+ \gamma_{n}\big(\xi_n -1-\Bar{Z}_n\big) \textbf{1}_{\{\Bar{Z}_n > 0\}}, \\
    \Bar{X}_{n+1} = \Bar{X}_n &+ \gamma_{n}\big[\textbf{1}_{\{x-wakes\}}\big(\textbf{1}_{x} \xi_n -1\big)\\
    &+ \textbf{1}_{\{y-wakes\}}\textbf{1}_{x} \xi_n - \Bar{X}_n \big] \textbf{1}_{\{\Bar{Z}_n > 0\}},
\end{aligned}\label{eq:discrete-ode}
\end{equation}
where $\mathbb{E}[\textbf{1}_{\{x-wakes\}}]=\frac{\Bar{X_n}}{\Bar{Z_n}}$, $\mathbb{E}[\textbf{1}_{\{y-wakes\}}]=1-\frac{\Bar{X_n}}{\Bar{Z_n}}$ indicate the probabilities of an individual of $x$-type and $y$-type wakes up. Let $\Bar{X}_0 = X_0$, $\Bar{Z}_0 = X_0+Y_0$ be the initial conditions. As the discrete-time trajectory of \eqref{eq:discrete-ode} is an asymptotic pseudo-trajectory of the continuous-time system in \eqref{eq:cont-ode} \cite{SA_for_BP}, the two systems share the same limiting behavior. Hence, we arrive at \Cref{prop:sa}.
\begin{equation}
    \begin{aligned}
    \Dot{z} &= h^z(z, x) = (m-1-z)\textbf{1}_{\{z > 0\}}, \\
    \Dot{x} &= h^x(z, x) = \big[\eta\big(\alpha_{xx}(s) \cdot m-1\big)\\ &+ (1-\eta)\alpha_{yx}(s) \cdot m - x \big]\textbf{1}_{\{z > 0\}}, \eta = \frac{x}{z}
\end{aligned}\label{eq:cont-ode}
\end{equation}
\begin{proposition}
    For $\mathbb{E}[N^2] < \infty$, the $\{\Bar{Z}_{n}\}, \{\Bar{X}_{n}\}$ sequences converge to $\Bar{Z}^*, \Bar{X}^*$ almost surely, where $\Bar{Z}^* = m-1$ and $\Bar{X}^*=\eta^*(s) \Bar{Z}^*$ with $\eta^*(s) = \frac{\alpha_{yx}(s)}{1-\alpha_{xx}(s)+\alpha_{yx}(s)}$ are solutions to \eqref{eq:cont-ode}.
\label{prop:sa}
\end{proposition}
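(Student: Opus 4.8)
The plan is to apply the ODE method for stochastic approximation, as in \cite{SA_for_BP}. First I would cast the coupled recursion \eqref{eq:discrete-ode} in Robbins--Monro form $\theta_{n+1}=\theta_n+\gamma_n\big(h(\theta_n)+M_{n+1}+\varepsilon_n\big)$ with $\theta_n:=(\Bar{Z}_n,\Bar{X}_n)$, where $h=(h^z,h^x)$ is the mean vector field on the right-hand side of \eqref{eq:cont-ode}, $M_{n+1}$ is the martingale-difference noise produced by the fluctuations of $\xi_n$ and of the wake-up/comment indicators around their conditional means, and $\varepsilon_n=O(\gamma_n)$ collects the discretization remainder. Since $\gamma_n=1/(n+1)$ obeys $\sum_n\gamma_n=\infty$ and $\sum_n\gamma_n^2<\infty$, the only substantive hypothesis to verify is a uniform bound on $\E[\|M_{n+1}\|^2\mid\mathcal F_n]$; because $\xi_n\sim Bin(N,q)$ gives $\E[\xi_n^2]=q^2\,\E[N^2]+q(1-q)\,\E[N]$, this bound holds \emph{precisely} under the stated assumption $\E[N^2]<\infty$, while the indicators are bounded. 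Boundedness of $\{\theta_n\}$ follows from the drift structure: $\Bar{Z}_n$ has drift $m-1-\Bar{Z}_n$, negative once $\Bar{Z}_n>m-1$, and $0\le\Bar{X}_n\le\Bar{Z}_n$ is preserved by construction, so the iterates stay in a compact set and no projection is needed. The standard convergence theorem then shows the interpolated trajectory is an asymptotic pseudo-trajectory of the flow of \eqref{eq:cont-ode}, so $\theta_n$ converges almost surely to an internally chain transitive set of that flow.

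Second I would prove that this flow has a unique globally asymptotically stable equilibrium on $\{z>0\}$. The $z$-equation decouples, $\dot z=(m-1)-z$, with the globally exponentially stable fixed point $z^*=m-1$ (using supercriticality $m>1$, which is implicit in the statement, since otherwise $\Bar{Z}^*=m-1\le0$ is not an admissible limit). Changing to the trend variable $\eta=x/z$ and substituting \eqref{eq:cont-ode}, a short computation gives, for $z>0$,
\[
\dot\eta \;=\; -\,\frac{m\big(1+\alpha_{yx}(s)-\alpha_{xx}(s)\big)}{z}\,\big(\eta-\eta^*(s)\big),
\]
a linear relaxation toward $\eta^*(s)=\alpha_{yx}(s)/(1-\alpha_{xx}(s)+\alpha_{yx}(s))$ with strictly positive rate, since $\alpha_{xx}(s),\alpha_{yx}(s)\in[0,1]$ (indeed $\alpha_{xx}(s)=\alpha_{yx}(s)=1-a^*(\mu_s)$ here, so $1+\alpha_{yx}(s)-\alpha_{xx}(s)=1$). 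Taking $V(\eta)=(\eta-\eta^*(s))^2$ yields $\dot V\le0$, strict away from $\eta^*(s)$, hence $\eta(t)\to\eta^*(s)$; together with $z(t)\to m-1$ this gives $x(t)=\eta(t)z(t)\to\eta^*(s)(m-1)$. Therefore $(\Bar{Z}^*,\Bar{X}^*)=(m-1,\eta^*(s)(m-1))$ is the unique internally chain transitive set --- and substituting it into \eqref{eq:cont-ode} confirms $h^z=h^x=0$, so it is an equilibrium --- which gives the claimed almost-sure convergence on the non-extinction event (on extinction $\Bar{Z}_n\to0$ trivially).

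The main obstacle I expect is twofold. First, the stochastic-approximation bookkeeping: writing the increments of \eqref{eq:discrete-ode} as mean field plus a genuine $\mathcal F_n$-martingale difference plus a vanishing term, checking the second-moment bound (where $\E[N^2]<\infty$ enters), and confirming that the indicator $\mathbf 1_{\{\Bar{Z}_n>0\}}$ does not break this decomposition. Second, the boundary $z=0$: one must argue that, conditioned on survival of the branching process, $\Bar{Z}_n$ is bounded away from $0$ in the limit, so that the change of variables $\eta=x/z$ and the global-stability argument are legitimate, and separately dispatch the extinction event. Once these technical points are in place the dynamics are transparent, so the weight of the proof lies in the stochastic-approximation machinery rather than in the ODE analysis.
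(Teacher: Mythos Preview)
Your proposal is correct and follows exactly the approach the paper invokes: the paper does not give a self-contained proof but simply states that the argument ``follows \cite{BPBP_tagging}'', relying on the stochastic-approximation/ODE machinery of \cite{SA_for_BP} already flagged before the proposition (the asymptotic pseudo-trajectory remark). Your Robbins--Monro decomposition, the use of $\E[N^2]<\infty$ for the second-moment bound, and the ODE analysis via the change of variables $\eta=x/z$ are precisely the ingredients that citation encapsulates, so there is nothing to add or correct.
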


The proof for the above proposition follows \cite{BPBP_tagging}. Note that $\eta^*(s)$ and $\eta^*(a)$ can be used interchangeably because the receiver decides an action $a$ based on the posterior belief $\mu_s$ with respect to the tag $s$. Since the receiver's comment only depends on the belief, we can characterize the limiting trend under tag $s$ by the following statement.
\begin{corollary}
   As $\alpha_{yx}(s) = \alpha_{xx}(s) = 1-\E_{\mu_s}[\omega]$, then the proportion of negative comments $\eta^*(s)=\eta^*(a(\mu_s))=\alpha_{yx}(s)=1-\E_{\mu_s}[\omega]$.
\label{remark}
\end{corollary}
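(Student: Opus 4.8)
The plan is essentially a one-line substitution, so the proposal is mostly about organizing the ingredients cleanly. First I would recall from \Cref{prop:sa} that the stabilized multi-type branching process pins down the limiting trend as $\eta^*(s)=\alpha_{yx}(s)/\bigl(1-\alpha_{xx}(s)+\alpha_{yx}(s)\bigr)$. Next I would invoke the symmetry of the transition-to-negative factors established in \Cref{sec:branching}: because a forwarding user's comment is determined solely by the posterior belief $\mu_s$ (via the best response \eqref{eq:a-star}, $a^*(\mu_s)=\E_{\mu_s}[\omega]$) and not by whether that user is of $x$-type or $y$-type, we have $\alpha_{xx}(s)=\alpha_{yx}(s)=1-\E_{\mu_s}[\omega]$. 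Substituting this common value, the terms $-\alpha_{xx}(s)$ and $+\alpha_{yx}(s)$ in the denominator cancel, the denominator collapses to $1$, and we are left with $\eta^*(s)=\alpha_{yx}(s)=1-\E_{\mu_s}[\omega]$.

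The remaining bookkeeping is the identification $\eta^*(s)=\eta^*(a(\mu_s))$: since $a=a^*(\mu_s)=\E_{\mu_s}[\omega]$ is a deterministic function of the tag $s$ through the belief, $\eta^*$ may be regarded interchangeably as a function of the tag or of the action it induces, so writing $\eta^*(a(\mu_s))$ is well-defined and equals the same quantity. I would also note in passing that $\E_{\mu_s}[\omega]\in[0,1]$, so the resulting $\eta^*(s)\in[0,1]$ is a legitimate proportion, and that with $\alpha_{xx}(s)=\alpha_{yx}(s)$ the denominator in \Cref{prop:sa} is identically $1$, so no division-by-zero degeneracy can occur.

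There is no real obstacle here; the content of the corollary is entirely carried by \Cref{prop:sa} together with the belief-only dependence of comments. If anything, the only subtlety worth a sentence is justifying the symmetry $\alpha_{xx}(s)=\alpha_{yx}(s)$ itself — namely, that once a user reads the post and its tag, the prior ``color'' of the comment they received does not alter the comment they themselves emit, which is exactly what \eqref{eq:a-star} encodes. Given that, the proof is a direct algebraic simplification.
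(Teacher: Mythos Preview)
Your proposal is correct and matches the paper's approach: the paper presents this corollary without any explicit proof, treating it as an immediate consequence of \Cref{prop:sa} via the substitution $\alpha_{xx}(s)=\alpha_{yx}(s)=1-\E_{\mu_s}[\omega]$, which is exactly what you do. Your write-up is if anything more careful than the paper's, since you spell out the denominator cancellation and the justification for the interchangeable notation $\eta^*(s)=\eta^*(a(\mu_s))$.
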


\subsection{Optimality Conditions under Stable Branching}
Given the receiver's best response $a^*(\mu_s)$ and the stabilized branching process, one can simplify the agent's problem, as the trend $\eta^*(s)$ admits a simple formula. Since $\eta^*(a)=1-\E_{\mu}[\omega]$ from \Cref{remark}, we notice that $v_A(\mu)=r_A(a^*(\mu))=1-\eta^*(a)=\E_{\mu}[\omega]=\mu(1)$, which is linear in $\mu(1)$. In the binary-state case, the belief $\mu_s$ is uniquely determined by its second entry $\mu(1)$. Hence, the following discussion will treat $\mu_s$ as a scalar. The same treatment also applies to the prior $p$. The agent's optimality conditions under the signaling in \eqref{eq:lambda-argmax} can be rewritten as 
\begin{align*}
    \max_{\lambda\in [0,1]}\langle p(\lambda), \Vec{v}_A(\pi)  \rangle -c(\lambda).
\end{align*}
Due to the linearity of the first term and the convexity of the second term, the optimality follows the first-order condition: 
\begin{align}
    \langle p_1-p_0, \Vec{v}_A(\pi) \rangle =\nabla c(\lambda), \label{eq:agent-opt}
\end{align} As later shown in the ensuing section, the agent's marginal cost $\nabla c$ plays a significant part in the feasibility of the sender's information structures. 

Since $\eta^*(a)=1-\E_{\mu}[\omega]$, the sender's expected utility under the belief $\mu$  is $v_{S}(\mu)=-(1-\E_{\mu}[\omega])\E_{\mu}[\omega]+\E^2_{\mu}[\omega]$. In the binary-state case, $v_S(\mu)=\mu(\mu-1)+\mu^2$. Hence, the sender's problem is given by 
\begin{equation}
    \begin{aligned}
    \max_{\pi, \lambda} & \langle p(\lambda), \Vec{v}_S(\pi) \rangle \\
    \text{s.t. } & \langle p_1-p_0, \Vec{v}_A(\pi) \rangle =\nabla c(\lambda) \\
    & \text{consistent belief system in } \eqref{eq:consistency}
\end{aligned}\label{eq:sender-old}
\end{equation}
Note that the agent's decision variable $\lambda$ also appears in the maximization, as we assume that the tie breaks in favor of the sender should there exists multiple effort level $\lambda$ satisfying the first constraint in \eqref{eq:sender-old}.  The consistency requirement in \eqref{eq:consistency} involves division operation, leading to a highly nonlinear programming problem. To simplify our analysis,  the proposition in the following section \ref{sec:plausibility} transforms the sender's problem into the posterior belief space using Bayesian Plausibility.

\section{Perfect Bayesian Equilibrium Characterization: A Lagrangian Approach}
\subsection{Bayesian Plausibility} \label{sec:plausibility}

 Bayesian plausibility is a sanity check for any information structure: all possible posterior beliefs induced by the realized signals should be consistent with the prior under the information structure. Formally,  the proposition below reformulates the sender's problem where the decision variable is a distribution over posteriors $\tau\in \Delta(\Delta(\Omega))$. 
\begin{proposition}[Bayesian Plausibility]
    Given an effort $\lambda$, there exists a signaling mechanism (tagging policy) $\pi$ satisfying the conditions in \eqref{eq:sender-old} if and only if there exists a distribution over posteriors $\tau\in \Delta(\Delta(\Omega))$ such that 
    \begin{align*}
    \E_{\tau}[\mu] = p(\lambda),\E_{\tau} \left[\E_{\mu}[\nabla \log p(\lambda) ]v_A(\mu)\right]=\nabla c(\lambda).
    \end{align*}
\end{proposition}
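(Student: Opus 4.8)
\noindent\emph{Proof outline.} The plan is to combine the classical Bayesian-plausibility (``splitting'') lemma of \cite{kamenica11BP}, which governs the consistency requirement in \eqref{eq:consistency}, with a score-function identity that recasts the agent's first-order constraint in \eqref{eq:sender-old} as an expectation over posteriors. Throughout I fix an interior effort $\lambda\in(0,1)$, so that $p(\omega|\lambda)=(1-\lambda)p_0(\omega)+\lambda p_1(\omega)>0$ for every $\omega$ and $\nabla\log p(\lambda)$ is well defined; the boundary efforts $\lambda\in\{0,1\}$, where one state never occurs and the score terms become $0\cdot\infty$, are excluded and treated by a separate convention (note $\lambda=1$ is anyway infeasible, as the left side of the agent's constraint is at most $1<\nabla c(1)$ under \Cref{assump:cost}).

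The key step is the following identity. For an arbitrary tagging policy $\pi$, write $P(s):=\langle\pi(s|\cdot),p(\cdot|\lambda)\rangle$ for the marginal probability of tag $s$ and let $\mu_s$ be the posterior induced through \eqref{eq:consistency}, so that $\pi(s|\omega)=\mu_s(\omega)P(s)/p(\omega|\lambda)$. Since $p(\lambda)=(1-\lambda)p_0+\lambda p_1$ gives $\nabla_\lambda p(\omega|\lambda)=(p_1-p_0)(\omega)$, hence $\nabla\log p(\omega|\lambda)=(p_1-p_0)(\omega)/p(\omega|\lambda)$, we obtain
\[
\sum_{\omega}\pi(s|\omega)(p_1-p_0)(\omega)=P(s)\sum_{\omega}\mu_s(\omega)\nabla\log p(\omega|\lambda)=P(s)\,\E_{\mu_s}[\nabla\log p(\lambda)].
\]
Multiplying by $v_A(\mu_s)$, summing over $s$, and using $\langle p_1-p_0,\Vec{v}_A(\pi)\rangle=\sum_{\omega}(p_1-p_0)(\omega)\sum_{s}\pi(s|\omega)v_A(\mu_s)$ yields
\[
\langle p_1-p_0,\Vec{v}_A(\pi)\rangle=\sum_{s}P(s)\,\E_{\mu_s}[\nabla\log p(\lambda)]\,v_A(\mu_s)=\E_{\tau}\left[\E_{\mu}[\nabla\log p(\lambda)]\,v_A(\mu)\right],
\]
where $\tau$ is the distribution on $\Delta(\Omega)$ placing mass $P(s)$ on each posterior $\mu_s$.

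For the ``only if'' direction I would take a $\pi$ meeting the constraints in \eqref{eq:sender-old}, form $\tau$ as above, and use the martingale property of beliefs (law of total probability) to get $\E_{\tau}[\mu]=\sum_{s}P(s)\mu_s=p(\lambda)$; the second required equation then follows from the displayed identity together with the constraint $\langle p_1-p_0,\Vec{v}_A(\pi)\rangle=\nabla c(\lambda)$. For the ``if'' direction, given $\tau$ with $\E_{\tau}[\mu]=p(\lambda)$, I would invoke the splitting lemma: index tags by the posteriors $\mu\in\supp\tau$ (enlarging $\Sigma$ if necessary) and set $\pi(\mu|\omega):=\mu(\omega)\tau(\mu)/p(\omega|\lambda)$. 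A one-line check gives $\sum_{\mu}\pi(\mu|\omega)=\E_{\tau}[\mu(\omega)]/p(\omega|\lambda)=1$, that the posterior induced by tag $\mu$ is $\mu$ itself, and that the induced distribution over posteriors is exactly $\tau$, so the consistency requirement in \eqref{eq:consistency} holds. Plugging this $\pi$ into the identity converts the hypothesis $\E_{\tau}[\E_{\mu}[\nabla\log p(\lambda)]v_A(\mu)]=\nabla c(\lambda)$ back into $\langle p_1-p_0,\Vec{v}_A(\pi)\rangle=\nabla c(\lambda)$, closing the equivalence.

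I expect the only real obstacle to be technical hygiene rather than any conceptual hurdle: one must check that the constructions never divide by zero (if $p(\omega|\lambda)=0$ then $\E_{\tau}[\mu(\omega)]=0$, so $\mu(\omega)=0$ for $\tau$-a.e.\ $\mu$ and those terms vanish) and fix the phrasing at the boundary efforts where $\nabla\log p(\lambda)$ blows up. A minor secondary point is the cardinality of $\Sigma$: the splitting construction may use more than two tags, but since $\Delta(\Omega)$ is one-dimensional here one can merge the support to a two-point $\tau$ afterwards (cf.\ \Cref{prop:optimal-is}). Everything else is routine bookkeeping with finite sums, so the write-up can stay short.
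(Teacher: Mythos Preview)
Your proposal is correct and follows essentially the same route as the paper: both directions hinge on the bijection $\pi(s|\omega)=\tau(\mu_s)\mu_s(\omega)/p(\omega|\lambda)$ between signaling rules and Bayes-plausible posterior distributions, and the agent's constraint is translated via the identical score-function identity $(p_1-p_0)(\omega)/p(\omega|\lambda)=\nabla_\lambda\log p(\omega|\lambda)$. If anything, your write-up is slightly more careful than the paper's, since you flag the boundary efforts where $\nabla\log p(\lambda)$ is ill-defined and the possible need to enlarge $\Sigma$, both of which the paper leaves implicit.
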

\begin{proof}
    We first prove the equivalence between the signaling mechanism $\pi$ and the distribution $\tau$. Without loss of generality, assume that for each signal $s\in\Sigma$, the receiver has a distinct posterior belief $\mu_s$. Starting from $\pi$, and fixing $\lambda$, the probability of generating $\mu_s$ is 
    \begin{equation*}
        \tau(\mu_s)=\sum_{\omega} \pi(s|\omega) p(\omega;\lambda)=\langle \pi(s|\cdot), p(\lambda) \rangle.
    \end{equation*}
    Following the above equation, one can compute the distribution of posteriors using the signaling. Conversely, recall that the Bayes rule gives 
    \begin{equation*}
        \mu_s=\frac{\pi(s|\cdot)\odot p(\lambda)}{\langle \pi(s|\cdot), p(\lambda)\rangle }= \frac{\pi(s|\cdot)\odot p(\lambda)}{\tau(\mu_s)},
    \end{equation*}
    implying that $\pi(s|\cdot)=\tau(\mu_s)(\mu_s\oslash p(\lambda))$, where $\oslash$ denotes the point-wise division. The equation above indicates that one can recover the signaling using the distribution of posteriors $\tau$. Finally, note that 
    \begin{align*}
         \pi(s|\cdot)=\tau(\mu_s)(\mu_s\oslash p(\lambda))\Leftrightarrow \sum_{s}\pi(s|\cdot) p(\lambda)=\sum_{s}\tau(\mu_s)\mu_s,
    \end{align*}
    which proves the first equality in the proposition. The posterior distribution $\tau$ associated with $\pi$ is called the Bayesian-plausible distribution in the literature \cite{kamenica11BP}.  

    To recover the agent's optimality condition (also called incentive-compatibility constraint), consider the constraint:
    \begin{align*}
        \langle p_1 - p_0,  \Vec{v}_A(\pi) \rangle =\nabla c(\lambda). 
    \end{align*}
    Plugging the above equation into the left-hand side gives 
    \begin{align*}
        &\langle p_1- p_0,   \Vec{v}_A(\pi) \rangle \\
        &= \sum_{\omega}\left(\sum_{s}\pi(s|\omega)v_A(\mu_s)\right)(p_1(\omega)-p_0(\omega))\\
        &=\sum_{\omega} \left(\sum_{s}\tau(\mu_s)\frac{\mu_s(\omega)}{\pi(\omega;\lambda)}v_A(\mu_s)\right)(p_1(\omega)-p_0(\omega))\\
        &=\sum_{s}\tau(\mu_s)\sum_{\omega} \left(\frac{p_1(\omega)-p_0(\omega)}{p(\omega;\lambda)}\mu_s(\omega)\right)v_A(\mu_s)\\
        &=\sum_{s}\tau(\mu_s)\sum_{\omega} \left(\frac{\nabla_\lambda p(\omega;\lambda)}{p(\omega;\lambda)}\mu_s(\omega)\right)v_A(\mu_s)\\
        &=\E_{\tau}[\E_{\mu}[\nabla_\lambda \log p(\omega;\lambda)] v_A(\mu)]
    \end{align*}
\end{proof}

Let $f(\mu)=\E_{\mu}[\nabla_\lambda \log p(\omega;\lambda)] v_A(\mu)-\nabla c(\lambda)$, then the sender's problem can be rewritten as 
\begin{align}
\max_{\tau\in\Delta(\Delta(\Omega)), \lambda} & \E_{\tau}[v_S(\mu)],\label{eq;sender-max}\\
 \text{s.t. } & \E_{\tau}[\mu]=p(\lambda),\label{eq:bp}\\
& \E_{\tau} [f(\mu)]=0,\label{eq: ic}
\end{align}
where \eqref{eq:bp}, referred to as the Bayesian plausibility constraint (BP), corresponds to the consistency in \eqref{eq:consistency}; \eqref{eq: ic}, referred to as the incentive-compatibility constraint (IC), rephrases the agent's optimality condition in \eqref{eq:agent-opt}. 

\subsection{The Lagrangian Characterization}

With Bayesian plausibility, the sender's problem becomes equality-constrained nonlinear programming, which naturally prompts one to consider the Lagrange multiplier method. In what follows, we present a PBE characterization through the lens of Lagrangian. The discussion begins with the feasible domain of the maximization in \eqref{eq;sender-max}.  
\begin{proposition}[Implementable Effort, Feasible Condition]
In the binary-state model, let $\bar{\lambda}$ be the value such that $\nabla c(\lambda)=p_1-p_0$. Then, $\lambda$ is feasible if and only if $\lambda\leq\bar{\lambda}$.
\label{prop:feasible}
\end{proposition}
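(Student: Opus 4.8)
The plan is to characterize feasibility of an effort level $\lambda$ by asking which values of the incentive-compatibility (IC) constraint \eqref{eq: ic} are attainable by some Bayesian-plausible distribution $\tau$ with mean $p(\lambda)$. Recall that in the binary-state model $v_A(\mu)=\mu(1)=:\mu$ is linear, the prior is the scalar $p(\lambda)=\lambda$, and $\nabla_\lambda \log p(\omega;\lambda)$ evaluated against $\mu$ gives $\E_\mu[\nabla_\lambda \log p]=\frac{p_1-p_0}{p(\lambda)}\cdot\mu$ in vectorized form; after the scalar reduction this collapses so that the IC constraint \eqref{eq: ic} reads $\E_\tau[g(\mu)]=\nabla c(\lambda)$ for an explicit function $g$ that is \emph{linear} in $\mu$ (indeed $g(\mu)=\langle p_1-p_0,\Vec v_A\rangle$-type terms reduce to something proportional to $\mu$, since $v_A(\mu)=\mu$ and the log-derivative factor is the constant vector $p_1-p_0$ divided out consistently). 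So first I would write down $g$ explicitly and observe its linearity.

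Second, I would use Bayesian plausibility: any $\tau\in\Delta(\Delta(\Omega))$ with $\E_\tau[\mu]=\lambda$ is admissible, and the support of $\tau$ lies in $[0,1]$. Because $g$ is linear, $\E_\tau[g(\mu)]=g(\E_\tau[\mu])=g(\lambda)$ is the \emph{same for every} plausible $\tau$ — wait, that would make the IC constraint pin down $\lambda$ uniquely, so the linearity must be only in the $v_A$ factor while the log-derivative introduces the $1/p(\lambda)$ nonlinearity. Re-examining: $f(\mu)=\E_\mu[\nabla_\lambda\log p(\omega;\lambda)]\,v_A(\mu)-\nabla c(\lambda)$, and $\E_\mu[\nabla_\lambda\log p]=\sum_\omega \frac{p_1(\omega)-p_0(\omega)}{p(\omega;\lambda)}\mu(\omega)$, which in the binary scalar reduction is $-\frac{1-\mu}{1-\lambda}+\frac{\mu}{\lambda}$, a nonlinear (in fact convex) function of $\mu$ on $(0,1)$; multiplying by $v_A(\mu)=\mu$ keeps it convex. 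Thus $f(\mu)+\nabla c(\lambda)$ is a convex function of $\mu$, so by Jensen $\E_\tau[f(\mu)+\nabla c(\lambda)]\geq f(\lambda)+\nabla c(\lambda)$ with the minimum $f(\lambda)+\nabla c(\lambda)$ attained by the degenerate (no-information) $\tau=\delta_\lambda$, and the maximum $\langle p_1-p_0,(0,1)\rangle$-type value attained by the fully-informative $\tau$ that puts mass $\lambda$ on $\mu=1$ and $1-\lambda$ on $\mu=0$. So the set of achievable values of $\E_\tau[\E_\mu[\nabla_\lambda\log p]v_A(\mu)]$ is exactly the interval $[\,0,\ p_1-p_0\,]$ (the lower endpoint being $0$ because $f(\lambda)+\nabla c(\lambda)=\E_{\delta_\lambda}[\cdots]$ works out to $0$ — I should verify this, but it follows since $\E_\mu[\nabla_\lambda\log p]$ at $\mu=\lambda$ is $-\frac{1-\lambda}{1-\lambda}+\frac{\lambda}{\lambda}=0$).

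Third, I would combine this with the IC equality \eqref{eq: ic}: $\lambda$ is feasible iff $\nabla c(\lambda)$ lies in this achievable interval $[0, p_1-p_0]$. Since $\nabla c$ is strictly increasing and continuous with $\nabla c(0)=0$ (Assumption \ref{assump:cost}), this holds iff $\nabla c(\lambda)\leq p_1-p_0$, i.e. iff $\lambda\leq\bar\lambda$ where $\bar\lambda$ is the unique solution of $\nabla c(\lambda)=p_1-p_0$ (unique and well-defined because $\nabla c(1)>1\geq p_1-p_0$ when the states are normalized so that $p_1-p_0=1$; more generally $\bar\lambda$ is well-defined by strict monotonicity and the intermediate value theorem, with $\bar\lambda\leq 1$). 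For the ``if'' direction I would exhibit, for each $\lambda\leq\bar\lambda$, an explicit $\tau$ — take a two-point distribution supported on $\{0,\mu_+\}$ or $\{\mu_-,1\}$ and solve the two scalar equations $\E_\tau[\mu]=\lambda$, $\E_\tau[g(\mu)]=\nabla c(\lambda)$, which is a linear system once the support is fixed, using the intermediate-value/interval argument above to guarantee a solution in the simplex.

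The main obstacle I expect is the bookkeeping in the scalar reduction of the log-derivative term — making sure that $\E_\mu[\nabla_\lambda\log p(\omega;\lambda)]\,v_A(\mu)$ is correctly identified as a convex function of the scalar $\mu$ with range exactly $[0,p_1-p_0]$ over Bayesian-plausible $\tau$, and confirming the endpoint values (in particular that the no-information posterior gives $0$ and the fully-informative posterior gives $p_1-p_0$). Once that convexity-plus-endpoints picture is pinned down, the feasibility characterization is immediate from monotonicity of $\nabla c$; the rest is routine.
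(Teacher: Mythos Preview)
Your argument is correct in outline and lands on the same characterization as the paper, but it differs in execution in two places worth noting.

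For necessity, the paper stays in the signaling formulation and simply observes that $\bar v_A(\omega)=\sum_s \pi(s|\omega)v_A(\mu_s)\in[0,1]$, so $\bar v_A(1)-\bar v_A(0)\le 1$ and hence $\nabla c(\lambda)\le p_1-p_0$. You instead pass to the posterior space, identify $h(\mu):=\E_\mu[\nabla_\lambda\log p]\,v_A(\mu)=\mu(\mu-\lambda)/(\lambda(1-\lambda))$, and use convexity of $h$ together with the extremal two-point (fully informative) distribution to get the upper bound $1$ and Jensen for the lower bound $0$. Both arguments are short; yours gives the whole achievable interval $[0,p_1-p_0]$ in one stroke, which makes the ``if'' direction immediate by an intermediate-value argument. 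One slip: the term $-\tfrac{1-\mu}{1-\lambda}+\tfrac{\mu}{\lambda}$ is \emph{affine} in $\mu$, not nonlinear; the convexity you need comes only after multiplying by $v_A(\mu)=\mu$, which yields a quadratic with second derivative $2/(\lambda(1-\lambda))>0$. Your conclusion is unaffected.

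For sufficiency, the paper writes down an explicit \emph{three}-point ``hybrid'' posterior $\tau$ supported on $\{0,\lambda,1\}$, namely the convex combination of the fully informative and the uninformative policies with weight $\nabla c(\lambda)/(p_1-p_0)$ on the informative one. Your two-point construction supported on $\{0,\mu_+\}$ (solving $\mu_+=\lambda+\nabla c(\lambda)(1-\lambda)$, $t=\lambda/\mu_+$) works equally well and is arguably cleaner for this proposition alone. The paper's three-point hybrid, however, is reused downstream (e.g., to show the uninformative policy is strictly dominated and in the cost-curvature analysis), so there is a reason they chose that particular witness.
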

\vspace{-2mm}
\begin{proof}
    We begin with the necessity. In the binary-state model, the incentive compatibility (IC) constraint reduces to 
    \begin{align*}
        (p_1-p_0)(\bar{v}_A(1)-\bar{v}_A(0))=\nabla c(\lambda),
    \end{align*}
    where $\bar{v}_A(\omega|\pi)=\sum_{s}\pi(s|\omega)v_A(\mu_s)$. Note that $v_A(\mu)=\mu\in [0,1]$, implying that $\bar{v}_A$ never exceeds $1$, and so is the difference $\bar{v}_A(1)-\bar{v}_A(0)$. Hence, $p_1-p_0\geq \nabla c(\lambda)$. As the cost function $c$ is strictly increasing, $\nabla c(\lambda)> \nabla c(\bar{\lambda})=p_1-p_0$, for $\lambda>\bar{\lambda}$, which means that $\lambda$ is not IC. 

    For sufficiency, consider $\lambda\in [0, \bar{\lambda}]$, and $p(\lambda)=(1-\lambda, \lambda)$. Let $\Delta p=p_1-p_0$, we construct a Bayesian-plausible $\tau$ as follows and refer it as a ``hybrid tagging policy''. $\operatorname{supp}(\tau)=\{0, \lambda, 1\}$ (these scalars denote the second entries of beliefs), and 
    \begin{align*}
        \tau(0)=\frac{(1-\lambda)\nabla c(\lambda)}{\Delta p}, \tau(\lambda)=1-\frac{\nabla c(\lambda)}{\Delta p}, \tau(1)=\frac{\lambda \nabla c(\lambda)}{\Delta p}. 
    \end{align*}
    It is straightforward to verify that this posterior distribution satisfies both constraints in the sender's problem. This construction implies that for any $\lambda\in [0, \bar{\lambda}]$, one can find a feasible $\tau$, and hence, $\lambda$ is also implementable.
\end{proof}
\begin{remark}
The maximum effort the receiver is willing to exert $\bar{\lambda}$ in uncovering the truth solely depends on their marginal cost $\nabla c(\bar{\lambda})$, regardless of their reputation. The higher the marginal cost $\nabla c(\lambda)$ is, the smaller the upper bound $\bar{\lambda}$ is, leading to a modest feasible set, which means that the agent cannot afford to create authentic content in this case, regardless of their reputation.
\end{remark}

The term ``hybrid'' of the constructed $\tau$ in the proof above is due to the observation that $\tau$ is a convex combination of two representative tagging policies. Consider $\overline{\tau}$ and $\underline{\tau}$: $\supp(\overline{\tau})=\{0, 1\}$, $\overline{\tau}(0)=1-\lambda$, and $\overline{\tau}=\lambda$; $\supp(\underline{\tau})=\{\lambda\}$ and $\overline{\tau}(\lambda)=1$. $\overline{\tau}$ is the fully informative tagging, where the receiver, upon receiving the tag, is certain about the accuracy: the post is either fake $0$ or true $1$. In contrast, $\underline{\tau}$ is the opposite: the uninformative tagging. The corresponding belief system is degenerate, including only one belief that is exactly the prior distribution. This degeneracy indicates that the receiver does not acquire any helpful information from the tag to update the belief on the content accuracy.         

From the above construction, we arrive at the following proposition, stating that the sender strictly prefers and incentivizes the agent to exert a positive effort level. 
\begin{proposition}[Positive Effort Level]
$\lambda=0$ is implementable under the uninformative signaling: $\pi(\cdot|\omega)=\operatorname{unif}(\Sigma)$ for any $\omega\in \Omega$. This uninformative signaling is strictly dominated by the hybrid signaling with $\lambda\in (0,\bar{\lambda})$. 
\label{cor:Positive effort}
\end{proposition}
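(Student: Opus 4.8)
The plan is to prove the two assertions in turn: that the uninformative policy indeed sustains zero effort, and that some hybrid policy with interior effort gives the sender strictly more. For the first part, under $\pi(\cdot|\omega)=\operatorname{unif}(\Sigma)$ the tag is uninformative, so the induced posterior distribution $\tau$ is degenerate at the prior, $\operatorname{supp}(\tau)=\{p(\lambda)\}$. I would take $\lambda=0$, so the single posterior is the scalar belief $\mu=0$ and, since $v_A(\mu)=\mu$, one has $\vec v_A(\pi)=[0,0]$. Then the Bayesian‑plausibility constraint $\E_\tau[\mu]=p(0)$ is immediate, and the incentive‑compatibility constraint $\langle p_1-p_0,\vec v_A(\pi)\rangle=\nabla c(\lambda)$ becomes $0=\nabla c(0)$, which holds by \Cref{assump:cost}. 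Hence $\lambda=0$ is implementable, with sender payoff $\E_\tau[v_S(\mu)]=v_S(0)=0$. (The economic content is that a belief system insensitive to effort lets the agent free‑ride down to the minimal‑cost action.)

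For the second part I would fix $\lambda\in(0,\bar\lambda)$ and take the hybrid $\tau$ from the proof of \Cref{prop:feasible}: $\operatorname{supp}(\tau)=\{0,\lambda,1\}$ with $\tau(0)=(1-\lambda)\nabla c(\lambda)$, $\tau(\lambda)=1-\nabla c(\lambda)$, $\tau(1)=\lambda\nabla c(\lambda)$ (using $p_1-p_0=1$ in scalar form). Using $v_S(\mu)=\mu(2\mu-1)$, so $v_S(0)=0$, $v_S(\lambda)=\lambda(2\lambda-1)$, $v_S(1)=1$, a short computation gives
\begin{equation*}
\E_\tau[v_S(\mu)]=\bigl(1-\nabla c(\lambda)\bigr)\lambda(2\lambda-1)+\lambda\nabla c(\lambda)=\lambda\bigl[2\lambda-1+2\nabla c(\lambda)(1-\lambda)\bigr].
\end{equation*}

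It then remains to show this expression is strictly positive for some $\lambda\in(0,\bar\lambda)$. Since $c\in C^2$, the right‑hand side is continuous in $\lambda$; as $\lambda\uparrow\bar\lambda$ we have $\nabla c(\lambda)\to\nabla c(\bar\lambda)=1$, so the limit is $\bar\lambda[2\bar\lambda-1+2(1-\bar\lambda)]=\bar\lambda$, which is strictly positive because $\nabla c$ is strictly increasing with $\nabla c(0)=0\neq1$. Hence there is $\lambda^\dagger\in(0,\bar\lambda)$ — feasible by \Cref{prop:feasible} — whose hybrid policy yields $\E_\tau[v_S(\mu)]>0$, i.e.\ strictly more than the uninformative policy's payoff of $0$, which is the claim.

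The main obstacle is this last step: one should not hope for domination uniformly in $\lambda\in(0,\bar\lambda)$, since for small $\lambda$ the middle posterior $\mu=\lambda$ contributes the negative term $\lambda(2\lambda-1)<0$ against the convex $v_S$, which can outweigh the gain from the $\mu=1$ atom. The fix is to exploit that as $\lambda\to\bar\lambda$ the hybrid $\tau$ collapses toward the fully informative policy ($\tau(\lambda)=1-\nabla c(\lambda)\to0$), driving the sender's payoff up to $\bar\lambda>0$; continuity then delivers a qualifying interior $\lambda^\dagger$.
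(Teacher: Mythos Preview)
Your argument is correct, and on the second assertion it is in fact more careful than the paper's own proof. For the first claim, both you and the paper verify that the uninformative policy makes the posterior degenerate at the prior, so $\vec v_A(\pi)$ has equal entries and the IC constraint $\langle p_1-p_0,\vec v_A(\pi)\rangle=\nabla c(0)=0$ holds by \Cref{assump:cost}; the sender's payoff is $v_S(0)=0$.

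For the domination claim, the paper simply asserts $\E_\tau[v_S(\mu)]=\lambda>0$ under the hybrid policy for every $\lambda\in(0,\bar\lambda)$. But as you compute, the actual value is $\lambda\bigl[2\lambda-1+2(1-\lambda)\nabla c(\lambda)\bigr]$; the paper's formula appears to drop the contribution of the middle atom $\mu=\lambda$, where $v_S(\lambda)=\lambda(2\lambda-1)\neq 0$. Your observation that this expression can be negative for small $\lambda$ is correct (the bracket tends to $-1$ as $\lambda\downarrow 0$), so uniform domination over all $\lambda\in(0,\bar\lambda)$ indeed fails. Your continuity fix---noting that as $\lambda\uparrow\bar\lambda$ the hybrid collapses toward the fully informative scheme with payoff $\bar\lambda>0$, so some interior $\lambda^\dagger$ yields a strictly positive sender value---is exactly what is needed, and it suffices for the existential reading of the proposition and for the subsequent corollary. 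In short, the paper follows the same route (evaluate the hybrid payoff and compare to zero) but with a computational slip; your version repairs it.
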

\begin{proof}
    From \Cref{assump:cost}, $\lambda=0$ implies $\nabla c(\lambda)=0$, and further implies that 
    \begin{align*}
        (p_1-p_0)\left(\sum_{s}\pi(s|1)v_A(\mu_s)-\sum_{s}\pi(s|0)v_A(\mu_s)\right)=0
    \end{align*}
    The uninformative signaling naturally satisfies the above equation; hence, $\lambda=0$ is implementable. This uninformative signaling ($\lambda=0$) induces a degenerate posterior distribution: $\operatorname{supp}(\tau)=\{p_0\}$, and the sender's expected utility is $0$. In contrast, consider the signaling in the proof above, $\operatorname{supp}(\tau)=\{0, \lambda, 1\}$, $\lambda\in (0, \bar{\lambda})$, with $\tau(0)$, $\tau(\lambda)$, and $\tau(1)$ as in \Cref{prop:feasible}.
    Note that $v_S(1)=1$, the sender's expected utility is $\E_\tau[v_S(\mu)]=\lambda>0$.
\end{proof}

\begin{corollary}
    As long as the set $(0, \bar{\lambda})$ is not empty, the sender can always create a tagging policy that incentivizes the agent to invest positive effort in discovering the truth, regardless of the value of the cost function.
\end{corollary}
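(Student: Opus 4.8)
The plan is to recognize that this corollary is essentially a corollary of \Cref{prop:feasible} and \Cref{cor:Positive effort}: the proof amounts to assembling those two results and stressing that the construction used there never exploits any particular feature of $c$ beyond \Cref{assump:cost}. First I would fix, under the hypothesis $(0,\bar\lambda)\neq\emptyset$, an arbitrary $\lambda_0\in(0,\bar\lambda)$ and invoke \Cref{prop:feasible}: since $\lambda_0<\bar\lambda$, the effort $\lambda_0$ is implementable, witnessed by the hybrid Bayesian-plausible distribution $\tau$ with $\supp(\tau)=\{0,\lambda_0,1\}$ and weights $\tau(0)=(1-\lambda_0)\nabla c(\lambda_0)/\Delta p$, $\tau(\lambda_0)=1-\nabla c(\lambda_0)/\Delta p$, $\tau(1)=\lambda_0\nabla c(\lambda_0)/\Delta p$, where $\Delta p=p_1-p_0$.

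Second, I would convert $\tau$ back into an actual tagging policy $\pi$ through the correspondence established in the Bayesian Plausibility proposition, namely $\pi(s|\cdot)=\tau(\mu_s)(\mu_s\oslash p(\lambda_0))$, so that what the sender commits to is a genuine information structure and not merely a distribution over posteriors. Under this $\pi$, the agent's first-order optimality condition \eqref{eq:agent-opt} is satisfied exactly at $\lambda_0$ by construction of $\tau$, and since $\lambda_0>0$ the agent is indeed incentivized to invest strictly positive effort in uncovering the truth; \Cref{cor:Positive effort} additionally guarantees this hybrid policy is not dominated by the uninformative one.

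Third, for the clause "regardless of the value of the cost function," I would observe that $c$ enters the entire construction only through the single scalar $\nabla c(\lambda_0)$, and that the validity of $\tau$ as a probability measure on $\Delta(\Omega)$ — all three weights nonnegative and summing to one, in particular $\tau(\lambda_0)=1-\nabla c(\lambda_0)/\Delta p\ge 0$ — is ensured precisely by $\lambda_0\le\bar\lambda$, which is where the hypothesis $(0,\bar\lambda)\neq\emptyset$ is used. No growth rate, curvature, or magnitude of $c$ beyond \Cref{assump:cost} is needed, which is the content of the ``regardless'' statement.

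The main obstacle, if it can be called one, is purely bookkeeping: checking that the hybrid $\tau$ simultaneously satisfies the Bayesian plausibility constraint $\E_\tau[\mu]=p(\lambda_0)$ and the incentive-compatibility constraint $\E_\tau[f(\mu)]=0$. Both are short linear verifications already carried out implicitly in the proof of \Cref{prop:feasible}, so essentially no new work is required; the point of the corollary is the conceptual emphasis that the mere nonemptiness of the interior of the implementable set suffices for the sender to elicit positive effort.
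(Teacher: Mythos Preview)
Your proposal is correct and takes essentially the same approach as the paper, which offers no separate proof for this corollary and treats it as an immediate consequence of \Cref{prop:feasible} and \Cref{cor:Positive effort}. You have simply made explicit the assembly of those two results and the observation that the hybrid construction depends on $c$ only through the scalar $\nabla c(\lambda_0)$, which is exactly the intended reading.
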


The above discussion addresses the agent's feasibility condition. In what follows, we shift the focus to the sender's problem, given an implementable effort $\lambda$. Denote by $\tau^\lambda$ and $V^\lambda$ the optimal solution to the sender's problem \eqref{eq;sender-max} (fixing $\lambda$), and the corresponding objective value, respectively. Consider the following set $F^\lambda\subset \R^{|\Omega|+2}$: $F^\lambda=\{(\mu, f(\mu), v_S(\mu)): \mu\in \Delta(\Omega)\}$. By construction, each entry of any element in $F^\lambda$ corresponds to the integrand in the three objects in the sender's problem \eqref{eq;sender-max}. These integrands are referred to as ex-post values. Denote by $co(F^\lambda)$ the convex hull of $F^\lambda$, including all the ex-ante values  that can be generated using a probability $\tau\in \Delta(\Delta(\Omega))$. A standard argument from constrained programming gives the following. 
\begin{proposition}
    Given an implementable effort $\lambda$, the maximal utility the sender can attain is $V^\lambda=\max\{v: (p(\lambda), 0, v)\in co(F^\lambda)\}$.
\end{proposition}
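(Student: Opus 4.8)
The plan is to establish the two inclusions between the optimal value of the constrained program \eqref{eq;sender-max}--\eqref{eq: ic} (with $\lambda$ held fixed) and the scalar optimization over the slice $\{(p(\lambda),0,v):v\in\R\}$ of $co(F^\lambda)$, and then to argue that the supremum is in fact a maximum. First I would observe that any feasible distribution $\tau\in\Delta(\Delta(\Omega))$ maps to the triple $\big(\E_\tau[\mu],\,\E_\tau[f(\mu)],\,\E_\tau[v_S(\mu)]\big)$, whose first two blocks equal $p(\lambda)$ and $0$ precisely because $\tau$ satisfies the Bayesian plausibility constraint \eqref{eq:bp} and the incentive-compatibility constraint \eqref{eq: ic}. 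Since this triple is a $\tau$-mixture of points $(\mu,f(\mu),v_S(\mu))\in F^\lambda$, it lies in $co(F^\lambda)$; hence $\E_\tau[v_S(\mu)]\le\max\{v:(p(\lambda),0,v)\in co(F^\lambda)\}$, and taking the supremum over feasible $\tau$ gives $V^\lambda\le\max\{v:(p(\lambda),0,v)\in co(F^\lambda)\}$.

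For the reverse inequality I would invoke Carath\'eodory's theorem: every point of $co(F^\lambda)\subset\R^{|\Omega|+2}$ is a convex combination of at most $|\Omega|+3$ elements of $F^\lambda$. Given $(p(\lambda),0,v)\in co(F^\lambda)$, such a representation yields weights $\{\beta_i\}$ and beliefs $\{\mu_i\}\subset\Delta(\Omega)$ with $\sum_i\beta_i\mu_i=p(\lambda)$, $\sum_i\beta_i f(\mu_i)=0$, and $\sum_i\beta_i v_S(\mu_i)=v$. Defining the finitely supported $\tau=\sum_i\beta_i\delta_{\mu_i}\in\Delta(\Delta(\Omega))$, these three identities are exactly feasibility for \eqref{eq;sender-max}--\eqref{eq: ic} together with objective value $v$, so $v\le V^\lambda$; taking the supremum over admissible $v$ gives the matching bound. (Note the implementability hypothesis guarantees this feasible set is nonempty, e.g.\ via the hybrid $\tau$ constructed in \Cref{prop:feasible}.)

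It remains to upgrade ``$\sup$'' to ``$\max$'', and this is the step I expect to require the most care. I would note that $\Delta(\Omega)$ is compact and that $\mu\mapsto f(\mu)$ and $\mu\mapsto v_S(\mu)$ are continuous on it --- $v_S(\mu)=\mu(\mu-1)+\mu^2$ is polynomial, and $f(\mu)=\E_\mu[\nabla_\lambda\log p(\omega;\lambda)]v_A(\mu)-\nabla c(\lambda)$ is a polynomial in $\mu$ times the affine $v_A(\mu)=\mu(1)$ once $\lambda$ is fixed --- so $F^\lambda$ is the continuous image of a compact set, hence compact. In finite dimension the convex hull of a compact set is compact (again Carath\'eodory, now applied to convergent subsequences of the weight--point representations), so $co(F^\lambda)$ is compact; intersecting with the closed affine slice $\{(p(\lambda),0,v):v\in\R\}$ leaves a compact subset of $\R$, on which the coordinate $v$ attains its maximum. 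This justifies writing $V^\lambda=\max\{v:(p(\lambda),0,v)\in co(F^\lambda)\}$, and the Carath\'eodory representation of the maximizer simultaneously shows an optimal $\tau^\lambda$ with support of size at most $|\Omega|+3$ exists. The only subtlety worth double-checking is that the optimal $\tau$ produced this way genuinely lies in $\Delta(\Delta(\Omega))$, i.e.\ that the $\mu_i$ are bona fide distributions --- which holds since they are drawn from $\Delta(\Omega)$ by construction of $F^\lambda$.
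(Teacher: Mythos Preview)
Your argument is correct and follows the same geometric idea as the paper --- identifying feasible $\tau$'s with points of $co(F^\lambda)$ whose first two blocks are $(p(\lambda),0)$ --- but you have supplied the details (Carath\'eodory for the converse direction, compactness of $F^\lambda$ and hence of $co(F^\lambda)$ for attainment) that the paper's three-line sketch leaves implicit. The only minor reordering I would suggest is to establish the compactness of $co(F^\lambda)$ before the first inequality, since for a general (not finitely supported) $\tau$ the barycenter $\E_\tau[(\mu,f(\mu),v_S(\mu))]$ is a priori only in the \emph{closed} convex hull.
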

\begin{proof}
    It suffices to note that $\mu=p(\lambda)$ naturally satisfies \eqref{eq:bp}, and $f(\mu)=0$ induces \eqref{eq: ic}. Therefore, any point $(\mu, f(\mu), v)\in \{(p(\lambda), 0, v)\in co(F^\lambda)\}$ is feasible to \eqref{eq;sender-max}. Therefore, $V^\lambda$, as a convex combination of these points, is the maximal value.   
\end{proof}

The above proposition gives a geometric intuition where the solution should be: $(p(\lambda), 0, V^\lambda))$ lies on the boundary of the convex set $co(F^\lambda)$. Hence, there exists a supporting hyperplane at $(p(\lambda), 0, V^\lambda))$, leading to the following.
\begin{proposition}[Lagrangian Characterization]
    Given an implementable $\lambda$, a distribution of posteriors $\tau^\lambda$ is a solution to the sender's problem if and only if it satisfies \eqref{eq:bp}, \eqref{eq: ic}, and there exists $\psi\in \R$, $\rho\in \R$, and $\varphi\in \R^{|\Omega|}$ such that 
    \begin{equation*}
        \mathcal{L}(\mu, \psi,\varphi)=v_S(\mu)+\psi f(\mu)-\langle \varphi , \mu \rangle \leq \rho , \text{ for all } \mu\in \Delta(\Omega),
    \end{equation*}
    where the equality holds for all $\mu$ such that $\tau^\lambda(\mu)>0$.
\label{prop:Lagrangian}
\end{proposition}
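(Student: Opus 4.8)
The plan is to read \Cref{prop:Lagrangian} as the supporting-hyperplane (KKT-type) characterization of the optimum of the equality-constrained problem \eqref{eq;sender-max}--\eqref{eq: ic} at a fixed implementable $\lambda$ --- note that, with $\lambda$ held fixed, the objective and both constraints are linear in $\tau$, which is what makes the concavification machinery applicable --- and to prove the two implications separately. Sufficiency is pure bookkeeping: if $\tau^\lambda$ is Bayesian-plausible \eqref{eq:bp}, incentive-compatible \eqref{eq: ic}, and admits multipliers $(\psi,\rho,\varphi)$ with $\mathcal{L}(\mu,\psi,\varphi)\le\rho$ on $\Delta(\Omega)$ and equality on $\supp(\tau^\lambda)$, then for any competing feasible $\tau'$ I would integrate the inequality against $\tau'$ and use $\E_{\tau'}[\mu]=p(\lambda)$ and $\E_{\tau'}[f(\mu)]=0$ to obtain $\E_{\tau'}[v_S(\mu)]\le\rho+\langle\varphi,p(\lambda)\rangle$; integrating against $\tau^\lambda$, where equality holds on the support, gives $\E_{\tau^\lambda}[v_S(\mu)]=\rho+\langle\varphi,p(\lambda)\rangle$, so $\tau^\lambda$ weakly beats every feasible $\tau'$ and is optimal.

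For necessity, suppose $\tau^\lambda$ is optimal with value $V^\lambda$. Since $F^\lambda$ is the continuous image of the compact simplex $\Delta(\Omega)$, Carath\'{e}odory's theorem makes $co(F^\lambda)\subset\R^{|\Omega|+2}$ compact and convex. By the preceding proposition ($V^\lambda=\max\{v:(p(\lambda),0,v)\in co(F^\lambda)\}$), the point $(p(\lambda),0,V^\lambda)$ maximizes the last coordinate over the slice $\{(\mu,y,v)\in co(F^\lambda):\mu=p(\lambda),\ y=0\}$, hence $(p(\lambda),0,V^\lambda+\epsilon)\notin co(F^\lambda)$ for all $\epsilon>0$ and $(p(\lambda),0,V^\lambda)$ is a boundary point. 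The supporting-hyperplane theorem then supplies a nonzero normal $(\varphi_0,\psi_0,\theta)$ with
\[
\langle\varphi_0,\mu\rangle+\psi_0\,f(\mu)+\theta\,v_S(\mu)\ \le\ \langle\varphi_0,p(\lambda)\rangle+\theta\,V^\lambda,\qquad\forall\,\mu\in\Delta(\Omega).
\]
Assuming $\theta>0$, I would divide through by $\theta$ and relabel $\varphi:=-\varphi_0/\theta$, $\psi:=\psi_0/\theta$, $\rho:=V^\lambda-\langle\varphi,p(\lambda)\rangle$, which is precisely $\mathcal{L}(\mu,\psi,\varphi)\le\rho$. The complementary-slackness clause is then forced: integrating this inequality against the optimal $\tau^\lambda$ and using \eqref{eq:bp} and \eqref{eq: ic} gives $V^\lambda=\E_{\tau^\lambda}[v_S(\mu)]\le\rho+\langle\varphi,p(\lambda)\rangle=V^\lambda$, so $\mathcal{L}(\mu,\psi,\varphi)=\rho$ for $\tau^\lambda$-almost every $\mu$, i.e., on $\supp(\tau^\lambda)$.

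The step I expect to be the main obstacle is showing the value-coefficient $\theta$ is strictly positive; otherwise the supporting hyperplane is ``vertical,'' separating by means of \eqref{eq:bp}--\eqref{eq: ic} alone and carrying no information about $v_S$ --- a constraint-qualification failure. To rule this out I would exploit the binary-state structure: $f$ is a strictly convex (quadratic) function of the scalar belief $\mu$, so the planar curve $\{(\mu,f(\mu)):\mu\in[0,1]\}$ has a full-dimensional convex hull, and by \Cref{prop:feasible} the point $(p(\lambda),0)$ is, for an implementable $\lambda\in(0,\bar{\lambda})$, a strict convex combination of the three non-collinear curve points at $\mu\in\{0,\lambda,1\}$ (the hybrid tagging policy), hence an interior point of that hull. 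An interior point admits no supporting line, so any supporting hyperplane of $co(F^\lambda)$ at $(p(\lambda),0,V^\lambda)$ must be non-vertical, giving $\theta\neq0$; and $\theta>0$ because $(p(\lambda),0,V^\lambda+\epsilon)$ lies strictly on the excluded side. The boundary cases $\lambda\in\{0,\bar{\lambda}\}$ would be settled directly from the explicit optima. Everything after the positivity of $\theta$ is routine normalization and bookkeeping.
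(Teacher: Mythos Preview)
Your proposal follows the same supporting-hyperplane route as the paper: necessity via a supporting hyperplane to $co(F^\lambda)$ at the boundary point $(p(\lambda),0,V^\lambda)$, complementary slackness by integrating against $\tau^\lambda$, and sufficiency by integrating the pointwise inequality against an arbitrary feasible competitor. The one substantive difference is that the paper simply \emph{asserts} the normal has the form $d=(-\varphi,\psi,1)$ and moves on, whereas you correctly flag the constraint-qualification issue (the value-coefficient $\theta$ could in principle vanish) and dispatch it by showing that, for $\lambda\in(0,\bar\lambda)$, the hybrid policy of \Cref{prop:feasible} places $(p(\lambda),0)$ in the relative interior of $\operatorname{co}\{(\mu,f(\mu)):\mu\in[0,1]\}$, so no vertical hyperplane can support. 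This is an improvement over the paper's argument rather than a departure from it; everything else matches.
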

\begin{proof}
We begin with the necessity. As $(p(\lambda),0, V^\lambda )$ is a boundary point of a closed convex set, the separating hyperplane theorem tells that there exists a normal vector $d=(-\varphi, \psi, 1)\in \R^{|\Omega|+2}$ and a scalar $\rho$ such that $\langle d, y\rangle \leq \rho$ for all $y\in co(F^\lambda)$, where the equality holds for $y=(p(\lambda), 0, V^\lambda)$. Rearranging terms in this inner product, we obtain that $ \mathcal{L}(\mu, \psi,\varphi)\leq \rho $. 

It remains to show that $\mathcal{L}(\mu, \psi,\varphi)=\rho$ for all $\mu\in \{\mu:\tau^\lambda(\mu)>0\}$. Suppose, for the sake of contradiction, that there exists some $\mu\in \operatorname{supp}(\tau^\lambda)$ such that  $\mathcal{L}(\mu, \psi, \varphi)<\rho$. Note that $\mathcal{L}(\mu, \psi, \varphi)\leq\rho$, then $V^\lambda=\E_{\tau^\lambda}[\mathcal{L}(\mu, \psi, \varphi)]< \rho$. Rearranging terms, we obtain $\langle d, (p(\lambda), 0, V^\lambda)\rangle < \rho$, which contradicts the fact that the supporting hyperplane passes through the point $(p(\lambda), 0, V^\lambda)$.

For the sufficiency part, if $v_S(\mu)+\psi f(\mu)\leq \rho + \langle \varphi, \mu\rangle $ for all $\mu \in \Delta(\Omega)$, then for any $\tau$, 
\begin{equation*}
    \E_{\tau }[v_S(\mu)]+\psi \E_{\tau}[f(\mu)]\leq \rho+\E_\tau[\langle \varphi, \mu\rangle].
\end{equation*}
Since $\tau^\lambda$ satisfies \eqref{eq:bp} and \eqref{eq: ic}, the above reduces to $\E_{\tau^\lambda }[v_S(\mu)]\leq \rho+ \langle \varphi, p(\lambda)\rangle$. If $\tau^\lambda$ is such that $\mathcal{L}(\mu,\psi, \varphi)=\rho$, for all $\mu\in\operatorname{supp}(\tau^\lambda)$, then $\E_{\tau^\lambda}[v_S(\mu)]=\rho+\langle \varphi, p(\lambda)\rangle $, meaning that the expected utility $\E_\tau[v_S(\mu)]$ reaches the upper bound at $\tau^\lambda$. 
\end{proof}

Fixing $\lambda\in (0, \bar{\lambda})$, consider the Lagrangian function $\mathcal{L}$ introduced in the above. Its second-order derivative is given by $\frac{\partial^2 \mathcal{L}}{\partial \mu^2} = \nabla^2 v_S(\mu) +\frac{2\psi}{\lambda(1-\lambda)}$. From the definition, $\nabla^2 v_S(\mu)>0$, and hence, the sign of $\frac{\partial^2 \mathcal{L}}{\partial \mu^2}$ depends on $\psi$, for which we have the following characterization. 
\begin{proposition}
    For any $\lambda\in (0,\bar{\lambda}]$, the Lagrange multiplier $\psi$ associated with the solution $\tau^\lambda$ is non-positive. 
\end{proposition}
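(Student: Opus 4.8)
The plan is to argue by contradiction, exploiting that in the binary-state model the posterior belief $\mu$ is a scalar in $[0,1]$, so that $\mu\mapsto\mathcal{L}(\mu,\psi,\varphi)$ is a univariate quadratic whose curvature is governed entirely by the sign of $\psi$. First I would invoke \Cref{prop:Lagrangian}: the solution $\tau^\lambda$ comes with $\psi,\rho\in\R$ and $\varphi\in\R^{|\Omega|}$ such that $\mathcal{L}(\mu,\psi,\varphi)\le\rho$ for all $\mu\in\Delta(\Omega)=[0,1]$, with equality on $\supp(\tau^\lambda)$; since the support is nonempty, $\rho=\max_{\mu\in[0,1]}\mathcal{L}(\mu,\psi,\varphi)$ and this maximum is attained. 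I would also record, from \Cref{assump:cost}, that $\nabla c(0)=0<p_1-p_0<\nabla c(1)$ with $\nabla c$ strictly increasing, so $\bar\lambda\in(0,1)$ and hence $\lambda(1-\lambda)>0$ for every $\lambda\in(0,\bar\lambda]$.

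Next I would use the curvature expression already displayed in the text. Since $\nabla^2 v_S(\mu)>0$ and, in the binary state, $f(\mu)=\E_\mu[\nabla_\lambda\log p(\omega;\lambda)]\,v_A(\mu)-\nabla c(\lambda)=\tfrac{\mu(\mu-\lambda)}{\lambda(1-\lambda)}-\nabla c(\lambda)$ (using $v_A(\mu)=\mu$ and the score $\E_\mu[\nabla_\lambda\log p(\omega;\lambda)]=\tfrac{\mu-\lambda}{\lambda(1-\lambda)}$), we have $\partial_\mu^2\mathcal{L}(\mu,\psi,\varphi)=\nabla^2 v_S(\mu)+\tfrac{2\psi}{\lambda(1-\lambda)}$, which is constant in $\mu$. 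Suppose toward a contradiction that $\psi>0$. Then this constant is strictly positive, so $\mathcal{L}(\cdot,\psi,\varphi)$ is strictly convex on $[0,1]$; a strictly convex function on an interval attains its maximum only at the endpoints, so $\{\mu:\mathcal{L}(\mu,\psi,\varphi)=\rho\}\subseteq\{0,1\}$ and hence $\supp(\tau^\lambda)\subseteq\{0,1\}$.

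I would then push this through the two feasibility constraints. The Bayesian plausibility constraint \eqref{eq:bp}, $\E_{\tau^\lambda}[\mu]=p(\lambda)=\lambda$, forces $\tau^\lambda(1)=\lambda$ and $\tau^\lambda(0)=1-\lambda$. Plugging $f(0)=-\nabla c(\lambda)$ and $f(1)=\tfrac1\lambda-\nabla c(\lambda)$ into the incentive compatibility constraint \eqref{eq: ic} gives $\E_{\tau^\lambda}[f(\mu)]=(1-\lambda)f(0)+\lambda f(1)=1-\nabla c(\lambda)$. For $\lambda\in(0,\bar\lambda)$, strict monotonicity of $\nabla c$ together with $\nabla c(\bar\lambda)=p_1-p_0=1$ gives $\nabla c(\lambda)<1$, hence $\E_{\tau^\lambda}[f(\mu)]>0$, contradicting \eqref{eq: ic}. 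This establishes $\psi\le0$ for $\lambda\in(0,\bar\lambda)$. For the endpoint $\lambda=\bar\lambda$ the hybrid distribution of \Cref{prop:feasible} degenerates, its atom at $\lambda$ carrying mass $1-\nabla c(\bar\lambda)=0$, to the fully informative tagging supported on $\{0,1\}$, and I would close this case either by taking the associated multiplier to be $-2\bar\lambda(1-\bar\lambda)\le0$ (checking it satisfies \Cref{prop:Lagrangian}) or by a limiting argument as $\lambda\uparrow\bar\lambda$.

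I expect the endpoint $\lambda=\bar\lambda$ to be the only delicate point: there $\nabla c(\bar\lambda)=1$, so the IC constraint no longer detects that a two-point support is inadmissible, the convexity contradiction loses its teeth, and the multiplier is moreover not unique for a two-atom solution, so a separate (or limiting) argument is genuinely needed. As a useful side remark, whenever $\tau^\lambda$ is the hybrid policy with three distinct atoms $\{0,\lambda,1\}$, which holds for every $\lambda\in(0,\bar\lambda)$, the multiplier is pinned down exactly and the proof shortens dramatically: a polynomial of degree at most two that equals $\rho$ at three distinct points must be the constant $\rho$, so the leading coefficient $2+\tfrac{\psi}{\lambda(1-\lambda)}$ of $\mathcal{L}$ vanishes, i.e.\ $\psi=-2\lambda(1-\lambda)<0$ (and $\partial_\mu^2\mathcal{L}\equiv0$, the affine Lagrangian being precisely the indifference that renders every feasible $\tau$ optimal).
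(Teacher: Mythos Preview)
Your argument is correct and is genuinely different from the paper's. The paper proves $\psi\le 0$ by a relaxation/weak-duality device: it drops the IC constraint \eqref{eq: ic}, observes that the relaxed optimum $\tilde\tau^\lambda$ is the fully informative tagging (because $v_S$ is convex), and then combines the Lagrangian inequality for the original problem with the attained value of the relaxed problem to obtain $\psi\,g(\lambda)\le 0$, where $g(\lambda)=\E_{\tilde\tau^\lambda}[f(\mu)]=1-\nabla c(\lambda)\ge 0$ for $\lambda\in(0,\bar\lambda]$. You instead exploit the explicit quadratic structure directly: if $\psi>0$ the Lagrangian is strictly convex, so $\supp(\tau^\lambda)\subseteq\{0,1\}$, Bayesian plausibility pins down the weights, and the IC constraint evaluates to $1-\nabla c(\lambda)$, the very same quantity, which is strictly positive on $(0,\bar\lambda)$ and hence violates \eqref{eq: ic}. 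Your route is shorter and more elementary in this binary model; the paper's duality argument is more portable, since the inequality $\psi\,g(\lambda)\le 0$ does not rely on the Lagrangian being a one-variable quadratic. Both proofs share the same soft spot at $\lambda=\bar\lambda$: there $1-\nabla c(\bar\lambda)=0$, so the paper's implication ``$\psi g(\lambda)\le 0$ and $g(\lambda)\ge 0$ imply $\psi\le 0$'' is vacuous exactly as your feasibility contradiction is; your explicit observation that the multiplier is not unique at a two-atom optimum, together with the exhibition of the admissible choice $\psi=-2\bar\lambda(1-\bar\lambda)\le 0$, is in fact a cleaner way to close that endpoint than what the paper records.
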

\begin{proof}
Consider a relaxation to the original problem without IC constraint \eqref{eq: ic}:
\begin{align}
    \widetilde{V}^\lambda=\max_{\tau\in \Delta(\Delta(\Omega))} \E_{\tau}[v_S(\mu)] \text{ subject to } \E_\tau[\mu]=p(\lambda),
\end{align}
which is exactly the standard Bayesian persuasion \cite{kamenica11BP}.

Denote by $\tilde{\tau}^\lambda$ the solution to the relaxed problem when fixing $\lambda$. Applying the Lagrangian characterization developed in Proposition 6, there exists $\tilde{\rho}$ and $\tilde{\varphi}$ such that $v_S(\mu)\leq \tilde{\rho} + \tilde{\varphi} \mu$, for all $\mu\in [0,1]$, with equality if $\tilde{\tau}(\mu)>0$.  Define $g(\lambda)=\E_{\tilde{\tau}}[f(\mu)]$. Let $\tau^\lambda$ be the solution to the original problem. We aim to prove $\psi g(\lambda)\leq 0$ in the following. The definition of two Lagrangians give
\begin{align}
    \rho+\varphi \lambda=\E_{\tau^\lambda}[v_S(\mu)]\leq \E_{\tilde{\tau}^\lambda}[v_S(\mu)]=\tilde{\rho}+\tilde{\varphi}\lambda.\label{eq:tilde-ineq}
\end{align}
Finally, taking the expectation of the original Lagrangian in \Cref{prop:Lagrangian} with respect to $\tilde{\tau}$, we obtain 
\begin{equation}
    \E_{\tilde{\tau}}[v_S(\mu)]+\psi \E_{\tilde{\tau}}[f(\mu)]\leq \rho +\varphi \lambda \Leftrightarrow  \tilde{\rho} +\tilde{\varphi} \lambda +\psi g(\lambda) \leq \rho +\varphi \lambda \label{eq:expect-ineq}
\end{equation}
Combining \eqref{eq:expect-ineq} and \eqref{eq:tilde-ineq} leads to $\psi g(\lambda)\leq 0$. 

The rest of the proof establishes that $g(\lambda)\geq0$ for $\lambda \in (0,\bar{\lambda}]$. Note that the sender's expected utility $v_S(\mu)=2\mu^2-\mu$ is convex in $\mu$. The standard persuasion analysis gives that the unique optimal signaling is the fully informative one \cite[Section 3]{kamenica11BP}, implying that $\operatorname{supp}(\tilde{\tau})=\{0,1\}$, and $\tilde{\tau}(0)=1-\lambda$, $\tilde{\tau}(1)=\lambda$. Direct calculation yields $f(\mu)=\frac{\mu(\mu-\lambda)}{\lambda(1-\lambda)}-\nabla c(\lambda).$ Hence, $g(\lambda)=\E_{\tilde{\tau}}[f(\mu)]=1-\nabla c(\lambda)\geq 0$, for $\lambda\in (0,\bar{\lambda}]$, implying that $\psi\leq 0$.

\end{proof}
Even though it seems that $\frac{\partial^2 \mathcal{L}}{\partial \mu^2}$ is not necessarily non-negative, the following proposition asserts that the Lagrangian must be a convex function of $\mu$, which leads to the main conclusion of this work: the sender's optimal signaling is the fully informative one, under which the agent is incentivized not to create misinformation to the best effort.     
\begin{proposition}
\label{prop:optimal-is}
    The Lagrangian function is convex with respect to $\mu$, and hence, the optimal signaling is fully informative and implements $\bar{\lambda}$. 
\end{proposition}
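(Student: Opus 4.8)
The plan is to establish convexity first and then read off the structural conclusion. The key observation is that, in the binary-state model, $\mathcal{L}(\cdot;\psi,\varphi)$ is a \emph{quadratic} function of the scalar $\mu\in[0,1]$: since $v_S(\mu)=2\mu^2-\mu$, $f(\mu)=\tfrac{\mu(\mu-\lambda)}{\lambda(1-\lambda)}-\nabla c(\lambda)$, and $\langle\varphi,\mu\rangle$ is affine in $\mu$, its second derivative $\tfrac{\partial^2\mathcal{L}}{\partial\mu^2}=4+\tfrac{2\psi}{\lambda(1-\lambda)}$ is constant in $\mu$, so ``$\mathcal{L}$ not convex'' is equivalent to ``$\mathcal{L}$ strictly concave.'' I would rule out the strictly concave case by contradiction: by \Cref{prop:Lagrangian} the optimal $\tau^\lambda$ is supported on the set of maximizers of $\mathcal{L}$ over $[0,1]$, and a strictly concave quadratic on a compact interval has a \emph{unique} maximizer $\mu^\star$, forcing $\tau^\lambda=\delta_{\mu^\star}$; but \eqref{eq:bp} then forces $\mu^\star=p(\lambda)=\lambda$ and \eqref{eq: ic} forces $f(\lambda)=-\nabla c(\lambda)=0$, while \Cref{assump:cost} ($c$ strictly increasing, $\nabla c(0)=0$) gives $\nabla c(\lambda)>0$ for every $\lambda\in(0,\bar{\lambda}]$ — a contradiction. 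Hence $\mathcal{L}$ is convex in $\mu$.

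For the ``hence'' part, I would use that a convex function on $[0,1]$ attains its maximum at the extreme points. When $\mathcal{L}$ is strictly convex its maximizer set lies in $\{0,1\}$, and it cannot be a single endpoint — that would make $\tau^\lambda$ a point mass at $0$ or $1$ and violate \eqref{eq:bp}, since $0<\lambda\le\bar{\lambda}<1$, with $\bar{\lambda}<1$ because $\nabla c(1)>1=\nabla c(\bar{\lambda})$ and $\nabla c$ is nondecreasing. So $\supp(\tau^\lambda)=\{0,1\}$, i.e., the tagging policy is fully informative; then \eqref{eq:bp} fixes the weights to $(1-\lambda,\lambda)$ and \eqref{eq: ic} collapses to $1-\nabla c(\lambda)=0$, i.e., $\lambda=\bar{\lambda}$. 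To conclude across all implementable efforts, I would combine this with the IC-free relaxation used to prove $\psi\le 0$: $V^\lambda\le\widetilde V^\lambda=\lambda\le\bar{\lambda}$, where $\widetilde V^\lambda=\lambda$ is the value of standard persuasion with the convex $v_S$; since the fully informative $\tau$ is feasible at $\lambda=\bar{\lambda}$ (its IC holds with equality, $1-\nabla c(\bar{\lambda})=0$) and attains $\E_\tau[v_S]=\bar{\lambda}$, the sender's global optimum equals $\bar{\lambda}$, attained by fully informative signaling, which implements $\bar{\lambda}$ and is unique by strict convexity of $v_S$.

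The step I expect to be the main obstacle is the boundary case $\psi=-2\lambda(1-\lambda)$, where $\mathcal{L}$ degenerates to an affine (indeed constant) function, its maximizer set is all of $[0,1]$, and the support argument above is vacuous; in fact for $\lambda<\bar{\lambda}$ this case genuinely occurs and the per-$\lambda$ optimal tagging is the three-point hybrid policy of \Cref{prop:feasible} rather than the fully informative one. The clean way around this is to read the conclusion as a statement about the sender's \emph{global} optimum over $(\tau,\lambda)$: the relaxation bound $V^\lambda\le\lambda\le\bar{\lambda}$ together with $V^{\bar{\lambda}}=\bar{\lambda}$ already isolates $\lambda=\bar{\lambda}$, where fully informative signaling is the unique maximizer. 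I would also quickly dispatch the affine-but-nonconstant subcase: its maximizer is a single endpoint, which is incompatible with \eqref{eq:bp} for interior $\lambda$, so no feasible $\tau^\lambda$ can support it.
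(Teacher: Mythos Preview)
Your argument follows the same skeleton as the paper's: observe that $\mathcal{L}$ is quadratic in $\mu$, so ``not convex'' means ``strictly concave everywhere,'' rule that out by contradiction, and then invoke the standard persuasion conclusion for convex $v_S$. The differences are in the contradiction step and in how much is spelled out afterward. The paper derives the contradiction via \Cref{cor:Positive effort}: strict concavity forces a degenerate (single-belief) posterior distribution, which is the uninformative signaling and is strictly dominated by the hybrid policy, contradicting optimality of the global solution. You instead fix $\lambda\in(0,\bar{\lambda}]$ and kill the degenerate candidate by direct IC-infeasibility, $f(\lambda)=-\nabla c(\lambda)\ne 0$; this is a cleaner per-$\lambda$ argument that avoids routing through the domination comparison. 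For the ``hence'' clause, the paper simply appeals to the standard Bayesian-persuasion argument for convex objectives, whereas you unpack it and, in particular, confront the affine boundary case $\psi=-2\lambda(1-\lambda)$ that the paper does not mention; your resolution via the relaxation bound $V^\lambda\le\widetilde V^\lambda=\lambda\le\bar{\lambda}$, attained with equality at $\bar{\lambda}$ by the fully informative $\tau$, is exactly the right way to pin down the global optimum and is more explicit than the paper's treatment.
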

\begin{proof}
    Suppose, for the sake of contradiction, that the multiplier $\psi$ associated with the solution is such that $\frac{\partial^2 \mathcal{L}}{\partial \mu^2}<0$. This contradiction indicates the convexity of the Lagrangian. Notice that $\nabla^2 v_S(\mu)=4$, and that $\frac{\partial^2 \mathcal{L}}{\partial \mu^2}$ is a constant, then one can see that the Lagrangian function is strictly concave everywhere. Therefore, the sender's optimal signaling is degenerate (only one belief) and is strictly dominated (see \Cref{cor:Positive effort}), which contradicts the optimality. Consequently, the standard argument from Bayesian persuasion literature \cite[Section 3]{kamenica11BP} also applies to the proposed model, leading to the statement above.  
\end{proof}

\begin{corollary}
   A viable ``prevent than cure'' solution to misinformation is simply the most straightforward tagging policy: revealing the truth to the user. 
    \label{cor:incentive provision}
\end{corollary}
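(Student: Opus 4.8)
The plan is to read the corollary off \Cref{prop:optimal-is} by unpacking what ``fully informative signaling'' means in the binary-state model and why implementing $\bar{\lambda}$ is a preventive measure. First I would invoke \Cref{prop:optimal-is}: the Lagrangian $\mathcal{L}(\mu,\psi,\varphi)$ is convex in $\mu$, so the standard Bayesian-persuasion argument \cite[Section 3]{kamenica11BP} applies and the sender's optimal distribution over posteriors is the fully informative $\overline{\tau}$, supported on the two degenerate beliefs $\mu=0$ and $\mu=1$. Running the Bayesian-plausibility equivalence of \Cref{sec:plausibility} backwards, $\overline{\tau}$ is induced by the deterministic tagging $\pi(s\mid\omega)=\1_{\{s=\omega\}}$: the realized tag always equals the state, i.e., the OSP discloses whether a post is fake ($\omega=0$) or accurate ($\omega=1$). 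This is exactly ``revealing the truth to the user,'' and it needs only the binary signal space $\Sigma=\{0,1\}$, so it is the most elementary tagging policy one could deploy.

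Second, I would recall from \Cref{prop:feasible} that this policy implements the largest implementable effort $\bar{\lambda}$, determined by $\nabla c(\bar{\lambda})=p_1-p_0$. The IC constraint \eqref{eq: ic} forces $\langle p_1-p_0,\Vec{v}_A(\pi)\rangle=\nabla c(\lambda)$, and under the fully informative $\pi$ the gap $\Bar{v}_A(1)-\Bar{v}_A(0)$ reaches its maximal value $1$ (since $v_A(\mu)=\mu$), so the induced effort is precisely $\bar{\lambda}$; no tagging policy can drive the agent to a strictly higher effort. Hence, even though the sender exerts no direct control over the agent, the truthful tag solicits the best effort the agent can afford.

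Third, I would connect effort to prevention: in the binary model the chance that a freshly created post carries misinformation is $p(\omega=0\mid\lambda)=1-\lambda$, strictly decreasing in $\lambda$, so pushing the implementable effort up to $\bar{\lambda}$ minimizes the rate at which misinformation is generated at the source rather than merely curtailing its later spread. Chaining the three observations---truthful disclosure is optimal, it implements $\bar{\lambda}$, and a larger $\lambda$ means less misinformation created---gives the corollary. There is no real obstacle here, only one point to state carefully: ``best effort'' in the claim is not the unconstrained ideal $\lambda=1$, which \Cref{assump:cost} ($\nabla c(1)>1$) renders infeasible, but the best \emph{implementable} effort $\bar{\lambda}<1$, which is exactly what the fully informative policy achieves.
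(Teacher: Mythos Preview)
Your proposal is correct and matches the paper's intent: the corollary is stated without proof, as an immediate interpretive restatement of \Cref{prop:optimal-is}, and your unpacking (fully informative $\overline{\tau}$ corresponds to $\pi(s\mid\omega)=\1_{\{s=\omega\}}$, this saturates the IC constraint at $\bar{\lambda}$, and higher $\lambda$ means $p(\omega=0\mid\lambda)=1-\lambda$ is minimized) is exactly the argument the paper leaves implicit. The only cosmetic slip is attributing ``implements $\bar{\lambda}$'' to \Cref{prop:feasible}; that proposition only bounds the implementable set by $\bar{\lambda}$, while the fact that the fully informative policy attains it is the content of \Cref{prop:optimal-is}---but you immediately supply the direct verification via $\Bar{v}_A(1)-\Bar{v}_A(0)=1$, so no logical gap results.
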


Lastly, we discuss the impact of the cost function $c(\lambda)$  on the agent's reputation  $1-\eta^*$, from which we further elaborate on how the sender provides the agent incentive to spend effort. Assume the cost function $c(\lambda)$ takes the quadratic form: $c(\lambda) = k\lambda^2$, and $k>\frac{1}{2}$ so that $\nabla c(1)>1$. 
\begin{proposition}
\label{prop:cost}
    Under the hybrid tagging, the equilibrium trend $\E_{\tau}[\eta^*(\mu)]$ admits the following characterizations, depending on the Hessian parameter $k$: 1) for $k\geq 1$, $\bar{\lambda}\leq 1/2$. $\E_{\tau}[\eta^*(\mu)]=1-\lambda\geq 1/2$, for $\lambda\in (0,\bar{\lambda}]$; 2) for $1/2< k < 1$, $\bar{\lambda}>1/2$, $\E_{\tau}[\eta^*(\mu)]=1-\lambda<1/2$, for $\lambda\in (1/2,\bar{\lambda})$. 
\end{proposition}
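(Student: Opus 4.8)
\textbf{Proof proposal for \Cref{prop:cost}.}

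The plan is to combine the already-established optimality result (\Cref{prop:optimal-is}: the sender's optimal signaling is fully informative and implements $\bar\lambda$) with the explicit formula for the trend under the fully informative/hybrid tagging, and then simply trace through the quadratic cost. First I would recall from \Cref{remark} that $\eta^*(\mu) = 1 - \E_\mu[\omega] = 1-\mu$ in the binary-state scalar notation, so that $\E_\tau[\eta^*(\mu)] = 1 - \E_\tau[\mu] = 1 - p(\lambda)$. But by the Bayesian plausibility constraint \eqref{eq:bp}, $\E_\tau[\mu] = p(\lambda) = \lambda$, so $\E_\tau[\eta^*(\mu)] = 1-\lambda$ regardless of which Bayesian-plausible $\tau$ (fully informative, hybrid, or otherwise) is used for the effort $\lambda$. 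This gives the common expression $\E_\tau[\eta^*(\mu)] = 1-\lambda$ appearing in both cases of the statement with essentially no work; the content of the proposition is therefore entirely about locating $\bar\lambda$ and reading off the sign of $1-\lambda - 1/2$.

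Next I would compute $\bar\lambda$ explicitly for $c(\lambda) = k\lambda^2$. By \Cref{prop:feasible}, $\bar\lambda$ solves $\nabla c(\bar\lambda) = p_1 - p_0$, which in the scalar binary-state convention is $\nabla c(\bar\lambda) = 1$ (since $p_1 - p_0 = [-1,1]$ collapses to the scalar $1$ under the reduction used throughout Section~III.C, consistent with the appearance of $\langle p_1-p_0, \vec v_A(\pi)\rangle = \nabla c(\lambda)$ with $v_A(\mu)=\mu\in[0,1]$). Since $\nabla c(\lambda) = 2k\lambda$, we get $\bar\lambda = \tfrac{1}{2k}$. The two regimes now fall out immediately: if $k\ge 1$ then $\bar\lambda = \tfrac{1}{2k}\le \tfrac12$, so every implementable $\lambda\in(0,\bar\lambda]$ satisfies $\lambda\le\tfrac12$, giving $\E_\tau[\eta^*(\mu)] = 1-\lambda\ge\tfrac12$; if $\tfrac12 < k < 1$ then $\bar\lambda = \tfrac{1}{2k}\in(\tfrac12,1)$, so for $\lambda\in(\tfrac12,\bar\lambda)$ we have $1-\lambda<\tfrac12$. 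I would also note that the restriction $k>\tfrac12$ is exactly what enforces $\nabla c(1) = 2k > 1$ from \Cref{assump:cost}, so the hypotheses are consistent, and that in regime~2 one still has $\E_\tau[\eta^*(\mu)]=1-\lambda>1/2$ for the complementary sub-range $\lambda\in(0,1/2]$, so the proposition is reporting the genuinely new behavior that large effort becomes affordable.

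I do not anticipate a serious obstacle here — the proposition is a corollary-style unpacking of \Cref{prop:feasible} and \Cref{prop:optimal-is} for a concrete cost. The one point requiring care is the scalar-versus-vector bookkeeping: $p_1-p_0$ is written as a vector elsewhere but the feasibility equation $\nabla c(\bar\lambda)=p_1-p_0$ in \Cref{prop:feasible} must be read in the scalar (second-coordinate) convention adopted just before \eqref{eq:agent-opt}, where $v_A(\mu)=\mu$ and the IC constraint becomes $(p_1-p_0)(\bar v_A(1)-\bar v_A(0)) = \nabla c(\lambda)$ with $p_1-p_0$ the scalar $1$; getting this normalization right is what pins down $\bar\lambda = 1/(2k)$ rather than some rescaled value. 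A secondary, purely expository, point is to state clearly that although \Cref{prop:optimal-is} shows the \emph{optimal} policy implements $\bar\lambda$, the proposition is deliberately phrased for the whole family of implementable efforts $\lambda$ under hybrid tagging, and the trend formula $1-\lambda$ is policy-independent by the BP constraint, so no optimality argument is actually needed for the $\E_\tau[\eta^*(\mu)]=1-\lambda$ identity — only the endpoint $\bar\lambda=1/(2k)$ uses \Cref{prop:feasible}.
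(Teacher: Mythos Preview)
Your argument is correct. The paper actually states \Cref{prop:cost} without proof, so there is nothing to compare against line by line; your derivation is precisely the intended one, as confirmed by the numerical section, which reports $\bar\lambda=1/2$ for $k=1$ and $\bar\lambda=5/6$ for $k=3/5$, i.e., exactly $\bar\lambda=1/(2k)$. Your use of Bayesian plausibility to obtain $\E_\tau[\eta^*(\mu)]=1-\E_\tau[\mu]=1-\lambda$ in one step is cleaner than a direct computation with the three hybrid weights, and your remark that this identity is policy-independent (hence does not rely on \Cref{prop:optimal-is}) is a useful clarification the paper leaves implicit.
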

\begin{remark}[Indirect Incentive Provision]
    $k=1, \bar{\lambda}=1/2$ is a turning point of the equilibrium trend. When it is costly for the agent to produce trustworthy content ($k\geq 1$), the average comment turns against themselves. The best the agent can hope is to exert the highest effort $\bar{\lambda}$ and to keep the user in the neutral position ($\E[\eta^*(\mu)]=1-\bar{\lambda}=1/2$). In contrast, the agent can earn a reputation for being a reliable information source when $\bar{\lambda}>1/2$, as the equilibrium trend under the maximum effort is in favor of themselves $\E[\eta^*]=1-\bar{\lambda}<1/2$. To sum up, the agent is always willing to exert the highest effort, whatever the cost is; and the sender achieves such an incentive provision through the receiver's action.  
\end{remark}

\section{Numerical Studies}\label{sec:numerical}
This section studies the proposed Bayesian persuaded branching processes model in three different tagging policies: fully informative, uninformative, and hybrid informative tagging policy. For each experiment, the branching setup is given by  $X_0=Y_0=50, m_{N}=50, q=0.5$, and $\tau_{n}=\tau_{1500}$. The numerical results in this section are the average of  $1000$ independent simulations.
\begin{figure*}
    \centering
\begin{subfigure}[t]{0.32\textwidth}
    \includegraphics[width=\textwidth]{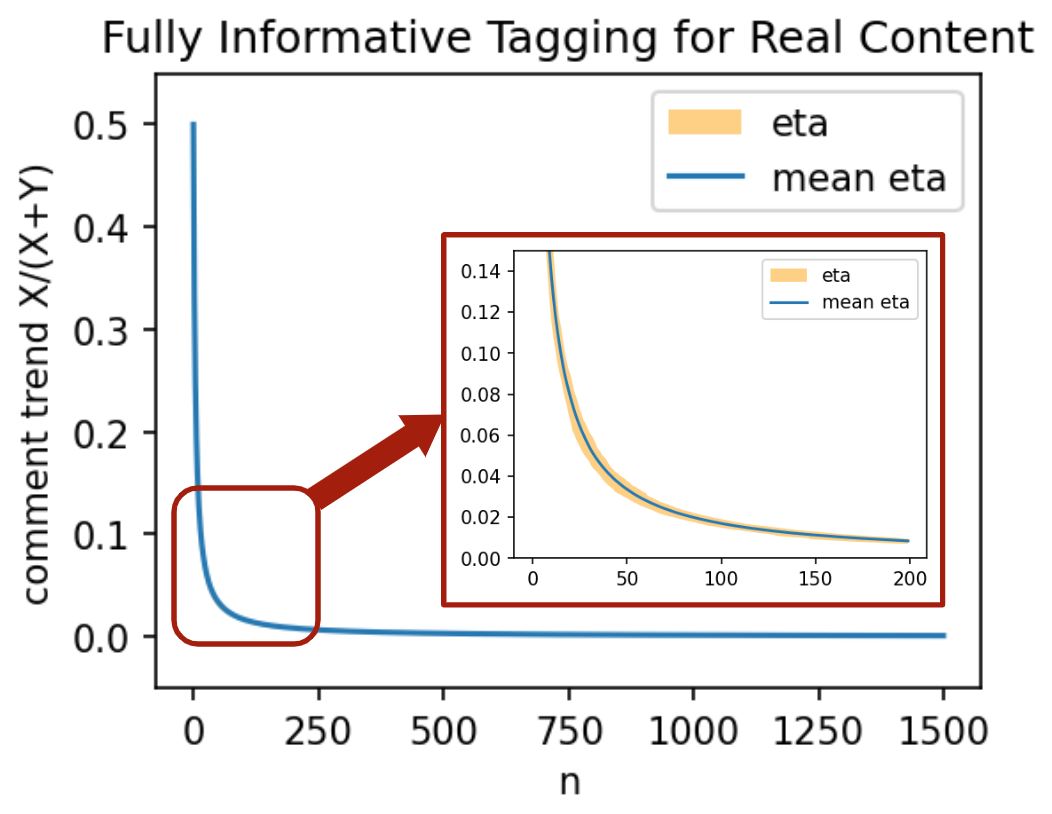}
    \caption{The authentic post yields a positive trend under the fully informative policy: $\eta=0$ when $\omega=1$.}
    \label{fig:fully_real}
\end{subfigure}
\hfill
\begin{subfigure}[t]{0.32\textwidth}
    \includegraphics[width=\textwidth]{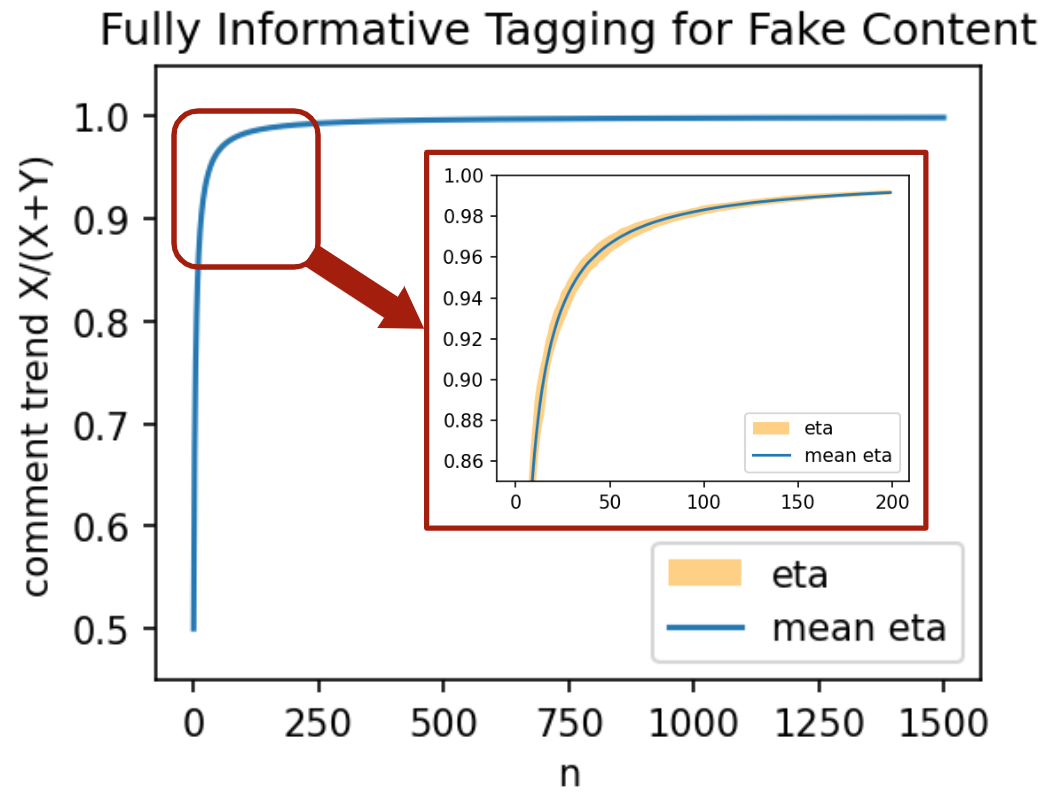}
    \caption{The post with misinformation yields a negative trend under the Fully informative/uninformative policy: $\eta=1$ when $\omega=0$.}
    \label{fig:fully_fake}
\end{subfigure}
\hfill
\begin{subfigure}[t]{0.32\textwidth}
    \includegraphics[width=\textwidth]{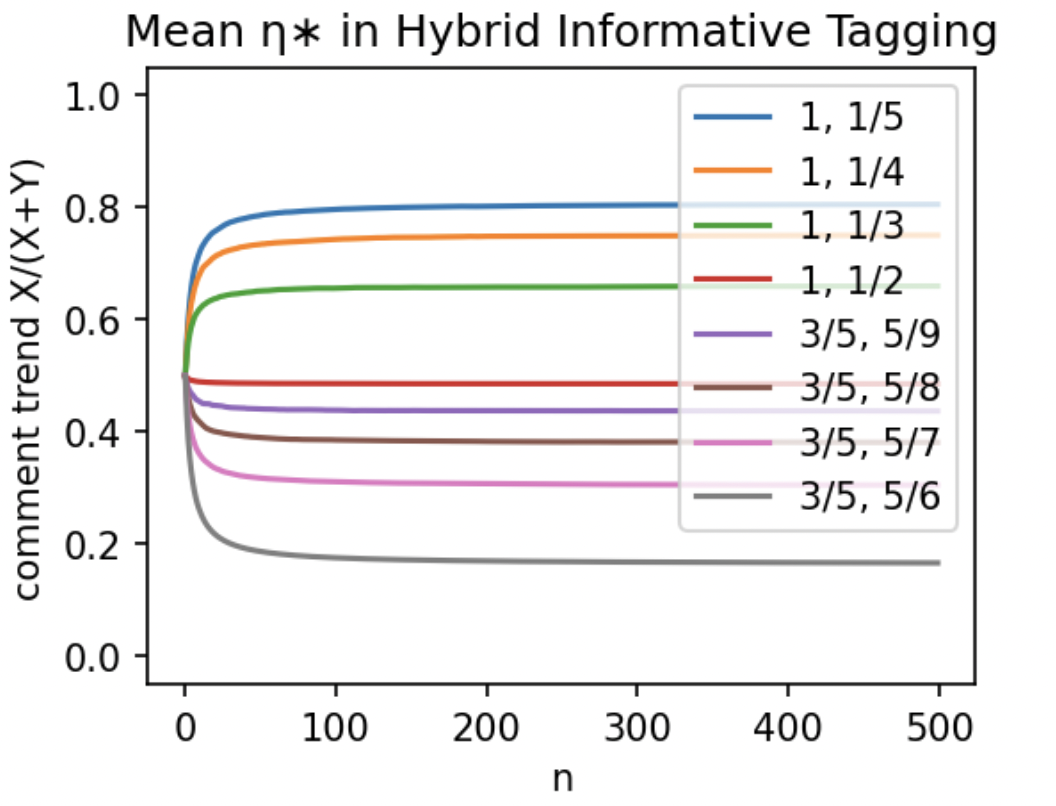}
    \caption{Trends under hybrid policies with different $k$ (former in the label) in the cost function and effort $\lambda$ (latter in the label).}
    \label{fig:hybrid}
\end{subfigure}
\vspace{-2mm}\caption{Simulations of online misinformation circulation under three representative tagging policies.}\vspace{-5mm}
\end{figure*}

\paragraph{Fully Informative Tagging Policy}
In this scenario, the policy for the OSP is to tag the post with its true state, i.e., $s= \omega$. For the authentic post, $\omega=1, s=1, \E_{\mu_s}[\omega]=1$,
and thus $\alpha_{yx}=\alpha_{xx}= 1-\E_{\mu_s}[\omega]=0$. The result for the proportion of negative comments $\eta^*$ is shown in \Cref{fig:fully_real}. On the other hand, for misinformation post, $\omega=0, s=0, \E_{\mu_s}[\omega]=0$
and thus $\alpha_{yx}=\alpha_{xx}= 1-\E_{\mu_s}[\omega]=1$. Under this tagging policy, each tag carries no ambiguity; consequently, the receiver is certain about the content's accuracy and comment on the post accordingly. As a result, the fully informative tagging leads to a positive trend for the authentic post [see \Cref{fig:fully_real}], while a negative one for misinformation [see \Cref{fig:fully_fake}].  The shaded yellow region in the figure indicates the standard deviation of $\eta^*$, while the blue line represents the mean.

\paragraph{Uninformative Tagging Policy}
Under the uninformative tagging, the OSP tags the post randomly, i.e., choosing $s=0$ and $s=1$ with probability $\frac{1}{2}$ regardless of $\omega$. According to \Cref{cor:Positive effort}, $\lambda=0$ and $\E_{\mu_s}[\omega]=0$, which leads to  $\alpha_{yx}=\alpha_{xx}= 1-\E_{\mu_s}[\omega]=1$ and $\eta^*=1$. The trend evolution is the same as in \Cref{fig:fully_fake}.

\paragraph{Hybrid Tagging Policy}  
We finally consider the hybrid tagging policy in \Cref{prop:feasible}. The cost function is of the quadratic form as in \Cref{prop:cost}. For $k=1$, the maximum feasible effort is $\bar{\lambda}=\frac{1}{2}$, under which the trend is neural: $\eta^*=0.5$. For any other $\lambda \in (0, \bar{\lambda})$, however, the resulting $\eta^*$ is strictly greater than half, as demonstrated in the upper side of \Cref{fig:hybrid}. The numerical results coincide with the analysis in \Cref{prop:cost}, showing that the agent needs to exert the best effort to investigate the truth so as not to hurt their reputation.  For $\frac{1}{2} < k < 1$, we take $k=\frac{3}{5}$ as an example. In this case, the maximum effort is $\bar{\lambda}=\frac{5}{6}$, and any implementable $\lambda > \frac{1}{2}$ leads to positive trends as shown in the lower part of \Cref{fig:hybrid}. The more effort the agent spends, the more positive the trend is; hence, the higher reputation the agent earns.

\section{Conclusion}
This work has investigated a preemptive approach to mitigate misinformation spread on OSP by disincentivizing the content creator to create misleading content in the first place. We have developed a three-player persuasion game to model the strategic interaction among the OSP, the content creator, and the user. By transforming the perfect Bayesian equilibrium into the posterior belief space, we have reformulated the OSP's equilibrium problem as an equality-constrained nonlinear programming (with a convex objective), which admits a concise Lagrangian characterization. The convexity of the Lagrangian implies that the OSP can solicit the best effort from the content creator in reducing misinformation, even though the OSP exerts no direct control over the content creator. One direction of the future work would be to investigate other mitigation mechanisms, including verification of the accuracy of the content and the accountability of the content creators.

\bibliographystyle{ieeetr}
\bibliography{ref}

\end{document}